\newtheorem{definition}{Definition}
\newtheorem{theorem}{Theorem}
\newcommand{\eref}[1]{(\ref{#1})}
\newcommand{\sref}[1]{Section~\ref{#1}}
\newcommand{\appref}[1]{Appendix~\ref{#1}}
\newcommand{\fref}[1]{Figure~\ref{#1}}
\newcommand{\cref}[1]{Constraint~\ref{#1}}
\newcommand{\thref}[1]{Theorem~\ref{#1}}
\newcommand{\ignore}[1]{}
\title{Delay Minimization for Instantly Decodable Network Coding in Persistent Channels with Feedback Intermittence}
\author{
   \authorblockN{Ahmed Douik, \textit{Student Member, IEEE}, Sameh Sorour, \textit{Member, IEEE},\\ Tareq Y. Al-Naffouri, \textit{Member, IEEE}, and Mohamed-Slim Alouini, \textit{Fellow, IEEE} \vspace{-0.9cm}}
    
    {\thanks {Ahmed Douik and Mohamed-Slim Alouini are with Computer, Electrical and Mathematical Sciences and Engineering (CEMSE) Division at King Abdullah University of Science and Technology (KAUST), Thuwal, Makkah Province, Saudi Arabia, email: \{ahmed.douik,slim.alouini\}@kaust.edu.sa

Sameh Sorour is with the Electrical Engineering Department, King Fahd University of Petroleum and Minerals (KFUPM), Dhahran, Eastern Province, Saudi Arabia, email:samehsorour@kfupm.edu.sa

Tareq Y. Al-Naffouri is with both the CEMSE Division at King Abdullah University of Science and Technology (KAUST), Thuwal, Makkah Province, Saudi Arabia, and the Electrial Engineering Department at King Fahd University of Petroleum and Minerals (KFUPM), Dhahran, Eastern Province, Saudi Arabia, e-mail: tareq.alnaffouri@kaust.edu.sa.

This is an extended version of work \cite{refahmed} accepted in IEEE WiMob'13, Lyon, France, October 2013.
}}}
\begin{document}
\maketitle
\IEEEoverridecommandlockouts
\begin{abstract}
In this paper, we consider the problem of minimizing the multicast decoding delay of generalized instantly decodable network coding (G-IDNC) over persistent forward and feedback erasure channels with feedback intermittence. In such environment, the sender does not always receive acknowledgement from the receivers after each transmission. Moreover, both the forward and feedback channels are subject to persistent erasures, which can be modeled by a two state (good and bad states) Markov chain known as Gilbert-Elliott channel (GEC). Due to such feedback imperfections, the sender is unable to determine subsequent instantly decodable packets combinations for all receivers. Given this harsh channel and feedback model, we first derive expressions for the probability distributions of decoding delay increments and then employ these expressions in formulating the minimum decoding problem as a maximum weight clique problem in the G-IDNC graph. We also show that the problem formulations in simpler channel and feedback models are special cases of our generalized formulation. Since this problem is NP-hard, we design a greedy algorithm to solve it and compare it to blind approaches proposed in the literature. Through extensive simulations, our adaptive algorithm is shown to outperform the blind approaches in all situations and to achieve significant improvement in the decoding delay, especially when the channel is highly persistent.
\end{abstract}

\begin{keywords}
Multicast channels, Persistent erasure channels, G-IDNC, Decoding delay, Lossy intermittent feedback, Maximum weight clique problem.
\end{keywords}

\section{Introduction} \label{sec:intro}

In the past decade, \emph{Network Coding (NC)} emerged as a promising technique to improve throughput \cite{ref7,5753573,xiao1} and delay over wireless erasure channels \cite{1208720}. Fundamental research has been conducted to approach the network capacity. However, due to a mixing in definitions in NC, a better use of channel or throughput does not mean a lower delay, in general, at the application level \cite{1208721,49413587}. This is mainly caused by the erasure nature of links that affects the delivery of meaningful data and thus affects the ability of receivers to synchronously decode the mixed information flows. Defining the delay in network coding is not straightforward. Many recent studies have been dedicated to better understand the delay aspect of network coding. These works can be divided into two groups. The first considers the delay as the overall transmission time and the second as the individual delay experienced when delivered packets are not useful at their reception instant. The former notion is called the completion time \cite{ref4,ref12,ref17,ref13} and the latter the decoding delay \cite{ref2,ref3,ref6,ref16,ref18}. 

An important subclass of network coding, called Instantly Decodable Network Coding (IDNC) \cite{6655395,6725590,6120247,xiao2}, gained much attention thanks to its several benefits. IDNC can be implemented using simple XOR based packet encoding and decoding which eliminates the need for matrix inversion at the receiver \cite{4313060}. Moreover, each non-instantly decodable packet is discard. These simple decoding and no buffer properties allow the design of simple and power efficient receivers.

IDNC can itself be classified into two schemes named Strict (S-IDNC) and Generalized IDNC (G-IDNC). The S-IDNC scheme, proposed in \cite{ref5}, imposes restrictions on the sender to generate no packets that cannot be decoded at reception by any of the receivers. The decoding delay performance of this scheme has been studied  \cite{ref5,ref6}. However, this strict IDNC constraint limits the number of receivers targeted simultaneously. To overcome this limitation, G-IDNC was introduced in \cite{ref2}. This scheme allows the sender to generate any combination of packets but forces the receivers to discard every packet that cannot be decoded at its reception instant. Through simulations G-IDNC was shown to outperform S-IDNC and to achieve a better decoding delay \cite{ref2}.

All these aforementioned works considered an ideal G-IDNC problem with accurate and prompt feedback at the sender from all the receivers and memory-less erasure channels (MECs) for the forward links. These assumptions are too idealistic given the severe and highly correlated impairments on both forward and feedback channels of wireless networks, due to shadowing, high interference and fading. Moreover, many wireless networks standards employ a time division duplex (TDD) structure for downlink and uplink transmissions. In such networks, the sender (usually a base station or an access point) cannot receive feedback at all from any of the receivers until the start of the uplink frame. Consequently, this sender will need to transmit several subsequent packets without having any information about their reception status at the different receivers. When G-IDNC is employed in such uncertainties about prior packet reception, the sender will no longer be certain on whether a sent packet combination will instantly provide a new packet for these receivers, and thus will not be certain about the resulting completion and decoding delays. 

Some recent works started to address some of these concerns in IDNC. Especially \cite{ref12} and \cite{ref13} studied the completion time for intermittent and lossy feedback, whereas \cite{ref3} studied the decoding delay problem in persistent erasure channels. In \cite{refahmed}, we studied the problem of minimizing the decoding delay in G-IDNC over memory-less erasure forward and feedback channels with feedback intermittence. In this paper, we aim to extend our work in \cite{refahmed} by studying the minimum decoding delay problem of G-IDNC in the more general persistent erasure channel (PEC) model on both the forward and feedback channels, and in the presence of feedback intermittence. 

To achieve our goal, we first derive expressions for the probability distributions of the decoding delay increments in the different levels of feedback uncertainty over such harsh channel model. We then employ these expressions to formulate the minimum decoding delay problem as a maximum weight clique problem in the G-IDNC graph. We then show that all previously obtained results for memory-less erasure channels and/or prompt feedback can be simply obtained as special cases of our proposed generalized formulation. Since finding maximum weight cliques is NP-hard \cite{ref15,ref10,ref11,ref8}, we design a greedy algorithm to solve the problem. We finally compare the performance of our optimal and heuristic adaptive algorithms to other blind approaches proposed in \cite{ref12,ref13}.

The rest of the paper is organized as follows. \sref{sec:sysmodel} introduces the system model and presents formal definitions of the terms and parameters used in the paper. In \sref{sec:chmodel}, we illustrate the considered channel and feedback models. In \sref{sec:formulation}, we derive the probability distributions of decoding delay increments and introduce our problem formulation. \sref{sec:cases} presents the problem formulations for simpler channel and feedback environments as special cases of our proposed generalized formulation. In \sref{sec:greedy}, we design a heuristic algorithm. Simulation results are presented in \sref{sec:simulation} before concluding the paper in \sref{sec:conclusion}. 

\section{System Model and Parameters}\label{sec:sysmodel}

The model we consider in this paper consists of a wireless sender that is required to deliver different (but possibly overlapping) portions of a frame (denoted by $\mathcal{N}$) of $N$ source packets to a set (denoted by $\mathcal{M}$) of $M$ receivers. Each receiver is interested in receiving a subset $\mathcal{L}_i$ of packets from this frame (i.e. $\mathcal{L}_i \subseteq \mathcal{N},~ \forall~ i \in \mathcal{M}$). The case where $\mathcal{L}_i = \mathcal{N},~ \forall~ i \in \mathcal{M}$ is referred to as a broadcast session. Let $L$ be the average percentage of needed packets by each receive $\left(\mbox{i.e.}~L= \sum_{i \in \mathcal{M}} |\mathcal{L}_i| /(M \times N) \right)$. We refer to the packets requested by receiver $i$ (packets in $\mathcal{L}_i$) and those not requested by it (packets in $\mathcal{N}\setminus \mathcal{L}_i$) as its primary and secondary packets, respectively.

In an \emph{initial phase}, the sender transmits the $N$ packets of the frame uncoded to all the receivers. Each receiver listens to all the transmitted packets and feeds back to the sender an acknowledgement for each successfully received packet (even unwanted packets). The feedback protocol of the recovery phase in given in the next section. Both transmitted packets from the sender and feedback from the receivers are subject to erasure. Since receivers send feedback only when they successfully receive a packet, then an unheard feedback event at the sender from a given receiver makes the sender uncertain on whether the packet or the feedback were erased. At the end of this \emph{initial phase}, four sets of packets are attributed to each receiver $i$:
\begin{itemize}
\item The Has set (denoted by $\mathcal{H}_i$) is defined as the sets of packets successfully received and acknowledged by receiver $i$.
\item The Lack set (denoted by $\mathcal{L}_i$) is defined as the sets of packet that are not in the Has sets. In other words, $\mathcal{L}_i = \mathcal{N} \backslash \mathcal{H}_i$.
\item The Wants set (denoted by $\mathcal{W}_i$) is defined as the sets of primary packets in the Lack sets (i.e. primary packets that are lost by receiver $i$ or whose feedback is erased at the sender). We have $\mathcal{W}_i \subseteq \mathcal{L}_i$.
\item The Uncertain set (denoted by $\mathcal{X}_i$) is defined as the sets of  packets whose state is uncertain. We have $\mathcal{X}_i \subseteq \mathcal{L}_i$.
\end{itemize}

The sender stores this information, after transmission at time $(t-1)$, in a \emph{feedback matrix} (SFM) $\mathbf{F}(t) = [f_{ij}(t)],~ \forall~ i \in \mathcal{M},~ \forall~j \in\mathcal{N},~ \forall~t \in \mathds{N}^+$ such that:
\begin{align}
f_{ij}(t) =
\begin{cases}
0 \hspace{0.9 cm}& \text{if } j \in \mathcal{H}_i(t) \\
-1 \hspace{0.7 cm}& \text{if } j \in \mathcal{L}_i(t) \setminus \mathcal{W}_i(t) \\
1 \hspace{0.9 cm}& \text{if } j \in \mathcal{W}_i(t) \setminus \mathcal{X}_i(t) \\
x \hspace{0.9 cm}& \text{if } j \in \mathcal{W}_i(t) \cap \mathcal{X}_i(t).
\end{cases}
\end{align}

After the \emph{initial phase}, a recovery transmission phase begins at time $t=1$. In this phase, the sender transmits X-OR combination of the source packets, using information from the feedback matrix and the expected erasure patterns. After each transmission, each receiver that received and successfully decoded a packet acknowledges the reception of all received packets. The sender uses the feedback to update the feedback matrix. This process is repeated until all receivers report that they obtained all their primary packets. We define the targeted receivers by a transmission as the receivers that can instantly decode a packet from this transmission.

In the recovery phase, the transmitted coded packets can be one of the following three options for each receiver $i$:
\begin{itemize}
\item \emph{Non-innovative:} A packet is non-innovative for receiver $i$ if all the source packets it encodes were successfully received previously or they are secondary packets.
\item \emph{Instantly Decodable:} A packet is instantly decodable for receiver $i$ if it contains \emph{only one source packet} from $\mathcal{L}_i$ that was not received previously.
\item \emph{Non-Instantly Decodable:} A packet is non instantly decodable for receiver $i$ if it contains two or more source packet from $\mathcal{L}_i$.
\end{itemize}

We use the same definition of decoding delay found in \cite{ref2,ref3}:
\begin{definition}
At any recovery phase transmission, a user $i$, with non-empty Wants sets, experiences a one unit increase of decoding delay if it receives a packet that is either non-instantly decodable or both instantly decodable and non-innovative.
\end{definition}

This definition of decoding delay does not take into account delay due to erasure in transmission channels and the one due to packets arrangement and reordering. It rather focuses on the delays resulting from the reception of useless coded packets. 

\section{Channel and Feedback Models}\label{sec:chmodel}

\subsection{Forward Channel Model and Parameters}

The memory criterion of the channel is modeled by the well known Gilbert-Elliott channel (GEC) \cite{45284,4607216}. The GEC is a varying channel, the crossover probabilities of which are determined by the current state of a discrete time stationary binary Markov process (see \fref{fig:GEC}). The states are appropriately designated G for good and B for bad. Due to the underlying Markov nature of the channel, it has memory that depends on the transition probabilities between the states, which can be defined for receiver $i$ as follows.:
\begin{figure}[t]
\centering
  % Requires \usepackage{graphicx}
  \includegraphics[width=1\linewidth]{./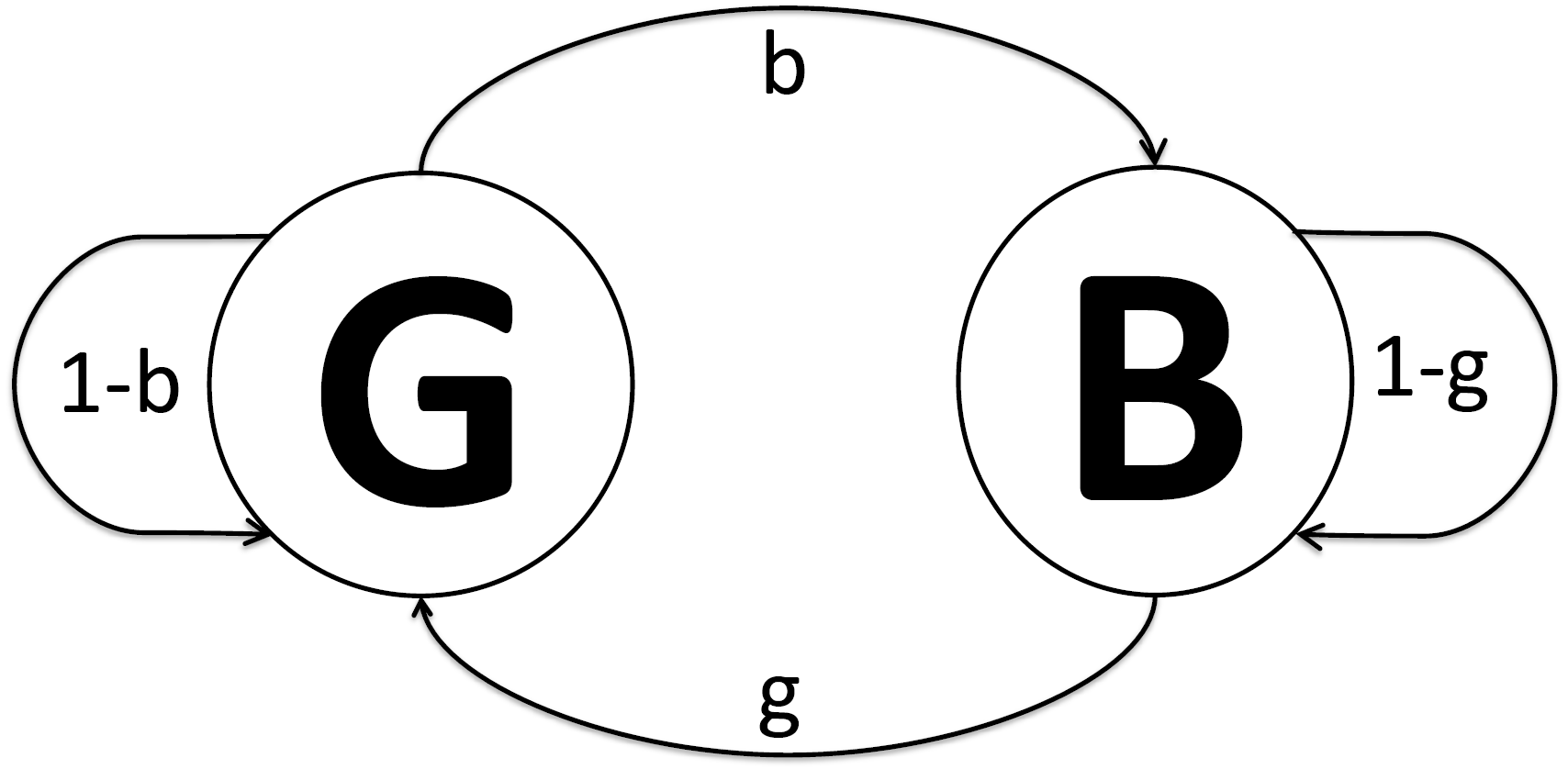}\\
  \caption{The two state Gilbert-Elliott channel.}\label{fig:GEC}
\end{figure}
\begin{align}
&\mathds{P} (C_i^p(t) = G | C_i^p(t-1) = B) = g_i^p \nonumber \\
&\mathds{P} (C_i^p(t) = B | C_i^p(t-1) = B) = 1-g_i^p \nonumber \\
&\mathds{P} (C_i^p(t) = B | C_i^p(t-1) = G) = b_i^p \nonumber \\
&\mathds{P} (C_i^p(t) = G | C_i^p(t-1) = G) = 1-b_i^p,
\end{align}
where $C_i^p(t)$ is the state of the channel for receiver $i$ at time $t ,~ \forall~ t \in \mathds{N}^+$ and the superscript $p$ refers to the forward channel. From a physical concern, the system is more likely to stay in the same state than switch between states. Thus, the values of $g_i^p$ and $b_i^p,~ \forall~ i \in \mathcal{M}$ are, in general, both less than or equal to $0.5$.

The probability to be in Good or Bad state (steady-state probabilities) can be calculated as :
\begin{align}
&\mathcal{P}_{G_i^p} = \mathds{P} (C_i^p = G) = \cfrac{g_i^p}{g_i^p+b_i^p} \nonumber \\ &\mathcal{P}_{B_i^p} = \mathds{P} (C_i^p = B) = \cfrac{b_i^p}{g_i^p+b_i^p}.
\end{align}

The memory factor of the channel is defined as $ \mu_i \triangleq 1 - g_i^p - b_i^p$ and so $0 \leq \mu_i \leq 1, ~\forall~i \in \mathcal{M}$. We also define the average channel memory $\mu = \sum_{i \in \mathcal{M}} \mu_i / M$. A high value of $\mu_i$ means that the states of the channel are highly correlated and thus the channel is likely to stay in the same state during the following transmission. When $\mu_i = 0$ (i.e. $g_i^p = 1 - b_i^p$), the state of the channel changes in an independent manner. Consequently, this channel model is a more general model of which the memory-less channel is a special case. 

We assume that the sender has perfect knowledge of the state transition probabilities of the channels of all the receivers and the initial state for each receiver. We also assume that there is no interaction between receivers and each one is seen through a channel that is independent from all the others receivers. 

\subsection{Feedback Channel Model and Parameters}

Let the time be divided into frames of length $T_f$ time-slots. Each frame is composed of a downlink sub-frame of length $T_d$ and an uplink sub-frame of length $T_u$ ($T_f = T_d+T_u$). Let $n^+(t) = \left\lceil \cfrac{t}{T_f} \right\rceil$ be the index of the frame at time $t$, where $ \lceil.\rceil$ is the ceiling function and $n^-(t) = \left\lfloor \cfrac{t}{T_f} \right\rfloor $ be the index of the downlink sub-frame before $n^+(t)$ at time $t$, where $ \lfloor.\rfloor$ is the floor function. In the intermittent lossy feedback scenario, the sender is able to transmit packets only during the downlink sub-frame and he receives feedback during the uplink sub-frame. In the uplink sub-frame, the sender receives feedback from the subset of the targeted receivers in the downlink sub-frame of the current frame. 

Let $T_{u_i}$ be the time-slot in the uplink sub-frame in which receiver $i$ sends feedback (i.e. $1 \leq T_{u_i} \leq T_u ,~ \forall~ i \in \mathcal{M}$). In other words, a receiver $i$ can transmit feedback, during frame number $n$, at time $t=n*T_f-T_u+T_{u_i}=u_i(n)$.

Similar to the forward channel, the feedback is subject to persistent loss. We model the feedback channel as a GEC. Define $g_i^q$ and $b_i^q$ as the transition probabilities, $\mathcal{P}_{G_i^q}$ and $\mathcal{P}_{B_i^q}$ as the steady-state probabilities, respectively, for each receiver $i$. Let $C_i^q(t)$ be the state of the feedback channel of receiver $i$ at time slot $t$. The memory factor of the channel for each receiver $i$ is $\psi_i \triangleq 1 - g_i^q - b_i^q,~ \forall~ i \in \mathcal{M}$. The superscript $q$ refers to the feedback channel.

\begin{figure}[t]
\centering
  % Requires \usepackage{graphicx}
  \includegraphics[width=1\linewidth]{./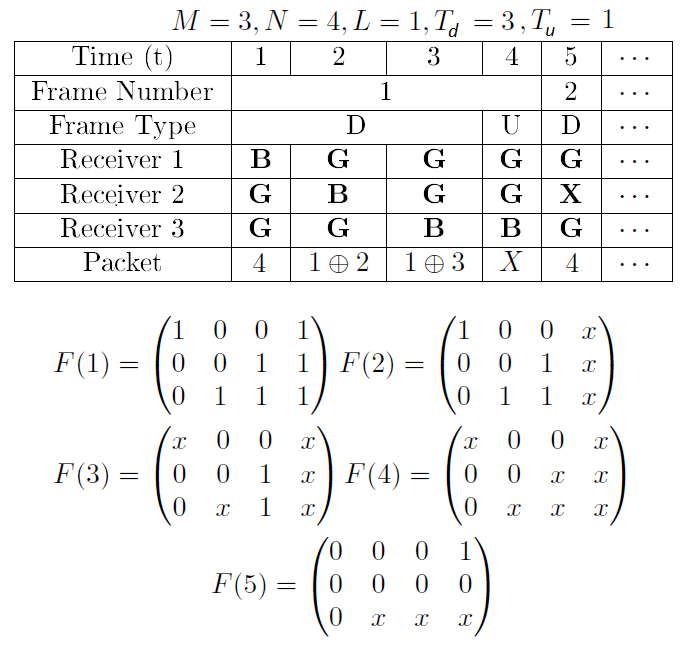}\\
  \caption{Illustration of a potential transmission in lossy intermittent feedback.}\label{fig:exemple}
\end{figure}

\fref{fig:exemple} shows an illustration of a potential transmission, in the first frame, for a lossy intermittent feedback and persistent erasure channels with all the system variables. In the example, D stands for Downlink frame, U for Uplink frame, B for Bad state and G for Good state.

We assume that for each targeted receiver there is a packet which is attempted only once from the last time a feedback is heard from that receiver. This constraint is needed to be able to estimate the state of the channel. When only one packet is left for that receiver, the state can be determined without this constraint and thus it can be removed.

We also assume, as in \cite{ref12,ref13}, that each feedback sent from a receiver includes acknowledgement of all previous received packets and that only targeted receivers will send feedback. In other words, if a feedback from one of the targeted receivers is lost, the sender will not get any feedback from this receiver until the next transmission in which it is targeted. 

The identically distributed channel can be viewed as a special case of this above model. It happens when both the transmission and the feedback channel have the same transitions probabilities (i.e. $g_i^p = g_i^q$ and $b_i^p = b_i^q, \forall~ i \in \mathcal{M}$) and the reciprocal channel, a further special case, is when both the transmission and the feedback are experiencing the same channel realization (i.e $C_i^p(t) = C_i^q(t),~ \forall~ i \in \mathcal{M}, \forall~ t \in \mathds{N}^+$).

\section{Minimum Decoding Delay Formulation}\label{sec:formulation}

In this section, we aim to derive the expected decoding delay increase for any arbitrary transmission at time $t$ and formulate the minimum decoding delay problem accordingly. To do so, we first need to derive several probability distributions in the first two subsections. 

\subsection{Transmission/Feedback Loss Probabilities at Time $t$}

In this section, we compute the probability to lose the transmission $p_i(t),~ \forall~ i \in \mathcal{M}$ and to lose the feedback $q_i(t),~ \forall~ i \in \mathcal{M}$, at time slot $t$.

In order to compute these probabilities, we first introduce the following variables: Let $n_i^{(-1)}$ and $n_i^{(0)}$ ($n_i^{(-1)} < n_i^{(0)}$) be the indices of the most recent frame where the sender heard a feedback from receiver $i$. Let $\lambda_{ij}(n)$ be the set of the time indices when packet $j$ was attempted to receiver $i$ during frame number $n$. Define $j_i^{(0)}$ as the last sent packet among those which were attempted only once between the two frames $(n_i^{(-1)}+1)$ and $n_i^{(0)}$ to receiver $i$. This variable can be mathematically defined as:
\begin{align}
j_i^{(0)} &= \underset{j \in \mathcal{W}_i(n_i^{(0)} \times T_f)}{\text{argmax}}  \bigcup_{k=n_i^{(-1)}+1}^{n_i^{(0)}} \lambda_{ij}(k) \nonumber \\
& \quad \text{subject to } \left| \bigcup_{k=n_i^{(-1)}+1}^{n_i^{(0)}} \lambda_{ij}(k) \right| = 1 ,
\end{align}
where $\cup_{x \in X}A_x $ is the union of the sets $A_x,~ \forall~ x \in X$. Let $t_i^{(0)}$ be the time where packet $j_i^{(0)}$ was attempted to receiver $i$ and $t_i^*$ the time just after the downlink frame $n_i^{(0)}$. In other words:
\begin{align}
t_i^{(0)} &= \bigcup_{k=n_i^{(-1)}+1}^{n_i^{(0)}} \lambda_{ij_i^{(0)}}(k) \\
t_i^* &= n_i^{(0)} \times T_f - T_u + 1.
\end{align}
Given these definitions, we can introduce the following theorem regarding  the loss probabilities of the forward and feedback transmissions at any given time $t$.
\begin{theorem}
The probabilities $p_i(t)$ and $q_i(t)$ of loosing a transmission from receiver $i$ or a feedback from it at time $t>t_i^*$ can be, respectively, expressed as:
\begin{align}
p_i(t) &= 
\begin{cases}
 \sum_{l=0}^{t-t_i^{(0)}-1}  \mu^l_ib_i^p \hspace{1.5 cm}&\text{if } f_{ij_i^{(0)}}(t_i^*) = 0\\
1- \sum_{l=0}^{t-t_i^{(0)}-1}  \mu^l_ig_i^p \hspace{0.3 cm}&\text{if } f_{ij_i^{(0)}}(t_i^*) = 1
\end{cases}  \\
q_i(t) &= \sum_{l=0}^{t-t_i^*+T_{u_i}-1} \psi^l_ib_i^q.
\end{align}
\label{th1}
\end{theorem}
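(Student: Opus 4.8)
The plan is to read both $p_i(t)$ and $q_i(t)$ as the "bad-state" probabilities of the underlying two-state Markov chains, conditioned on the most recent instant at which the sender can be \emph{certain} of the channel state, and then to evaluate the required multi-step transition probability in closed form. Since a forward packet is erased exactly when $C_i^p(t)=B$ and a feedback is erased exactly when $C_i^q(t)=B$, I would first write $p_i(t)=\mathds{P}(C_i^p(t)=B\mid \mathcal{I})$ and $q_i(t)=\mathds{P}(C_i^q(t)=B\mid \mathcal{I})$, where $\mathcal{I}$ denotes the information available at the sender through the feedback matrix.

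The conceptual crux, and the step I would carry out first, is to locate the last slot whose state is known with certainty and to identify that state. For the forward channel this is exactly the slot $t_i^{(0)}$ at which the single-attempt packet $j_i^{(0)}$ was transmitted: the single-attempt assumption guarantees that the fate of $j_i^{(0)}$ is attributable to one specific slot, so the two admissible feedback-matrix entries translate unambiguously into a known state. If $f_{ij_i^{(0)}}(t_i^*)=0$ then $j_i^{(0)}\in\mathcal{H}_i$, i.e. the attempt succeeded and $C_i^p(t_i^{(0)})=G$; if $f_{ij_i^{(0)}}(t_i^*)=1$ then $j_i^{(0)}$ is still wanted with its status resolved by a heard feedback, so the attempt failed and $C_i^p(t_i^{(0)})=B$. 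For the feedback channel the reasoning gives a single case: the sender can only have \emph{heard} a feedback while $C_i^q=G$, so the last heard feedback pins the feedback channel to $G$ at the corresponding uplink slot, which explains why $q_i(t)$ has no "$B$-start" branch.

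Next I would invoke the Markov property: conditioned on the known reference state, the distribution at time $t$ depends only on the number $n$ of elapsed slots (with $t>t_i^*$ ensuring $n\geq 1$ and nonempty sums), so the problem reduces to the $n$-step transition probabilities of the two-state chain. Diagonalizing the transition matrix $\left(\begin{smallmatrix}1-b & b\\ g & 1-g\end{smallmatrix}\right)$ (eigenvalues $1$ and the memory factor $\mu$) yields, for a start in $G$, $\mathds{P}(B\text{ at }t_0+n\mid G\text{ at }t_0)=\tfrac{b}{g+b}(1-\mu^n)=\sum_{l=0}^{n-1}\mu^l b$, and for a start in $B$, $\mathds{P}(B\text{ at }t_0+n\mid B\text{ at }t_0)=\tfrac{b}{g+b}+\tfrac{g}{g+b}\mu^n=1-\sum_{l=0}^{n-1}\mu^l g$, the second equalities using $1-\mu=g+b$. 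Specializing to $(g_i^p,b_i^p,\mu_i)$ with $n=t-t_i^{(0)}$ gives the two forward cases verbatim, and the identical computation with $(g_i^q,b_i^q,\psi_i)$ gives $q_i(t)$.

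The main obstacle I anticipate is not the transition-probability algebra (routine once the chain is diagonalized) but the slot-counting bookkeeping, particularly for the feedback channel, where the downlink/uplink frame structure and the per-receiver uplink offset $T_{u_i}$ must be tracked to turn the reference instant into the correct exponent $t-t_i^*+T_{u_i}$; I would verify this by writing the reference uplink slot explicitly in terms of $n_i^{(0)}$, $T_f$, $T_u$ and $T_{u_i}$ and counting transitions up to $t$. A secondary point requiring care is justifying that no information between the reference slot and $t$ refines the estimate, which again rests on the single-attempt assumption and on the fact that no feedback is heard after frame $n_i^{(0)}$.
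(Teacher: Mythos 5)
Your proposal follows essentially the same route as the paper's proof: locate the last slot at which each channel's state is known with certainty (for the forward channel, the slot $t_i^{(0)}$ of the single-attempt packet $j_i^{(0)}$, with $f_{ij_i^{(0)}}(t_i^*)=0$ pinning the state to $G$ and $f_{ij_i^{(0)}}(t_i^*)=1$ pinning it to $B$; for the feedback channel, the last heard feedback, which can only have been sent in state $G$), then evaluate the $n$-step bad-state probability of the two-state chain. The only cosmetic difference is computational: you diagonalize the transition matrix (eigenvalues $1$ and $\mu$), whereas the paper solves the first-order recursion $f(n)=\mu f(n-1)+P_{tr_{x\rightarrow y}}$ in its Appendix~A auxiliary theorem; the closed forms are identical.

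One substantive remark on the step you flag as needing verification: the feedback-channel slot counting is indeed the weak point, and the paper's own proof does not resolve it consistently. By the model's definitions the last feedback was heard at $u_i(n_i^{(0)})=n_i^{(0)}T_f-T_u+T_{u_i}=t_i^*+T_{u_i}-1$, so conditioning on $C_i^q(t_i^*+T_{u_i}-1)=G$ would give the exponent range $0,\dots,t-t_i^*-T_{u_i}$. The paper instead asserts in words that the state was good at time $t_i^*-1$, and then conditions in its equations on $C_i^q(t_i^*-T_{u_i})=G$, which is what yields the stated upper limit $t-t_i^*+T_{u_i}-1$; these three reference instants are mutually inconsistent. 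So if you carried out the verification you describe, using the paper's own definition of $u_i(n)$, you would not reproduce the formula as stated in the theorem: the discrepancy is a sign/offset error internal to the paper (in the theorem statement or in the proof's conditioning instant), not a flaw in your approach, but your proposal as written leaves exactly this unresolved and implicitly assumes the stated exponent is the one the bookkeeping will produce.
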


\begin{proof}
The proof can be found in \appref{ap2}.
\end{proof}

\subsection{Decoding Delay Increment Probabilities}

Let $\mathfrak{J}=j_{k_1} \oplus j_{k_2} \oplus \cdots \oplus j_{k_n}$ be a combination of $n$ packets from $\mathcal{N}$ with $1  \leq k_i \leq N $ and $k_i \neq k_j, \forall~ i \neq j$. Let $d_i(\mathfrak{J},t)$ be the decoding delay increase experienced by receiver $i$ after the transmission of $\mathfrak{J}$ at time $t$. 

From our definition of decoding delay in \sref{sec:sysmodel}, receiver $i$ with a non-empty Wants set will not experience a decoding delay (i.e. $d_i(\mathfrak{J},t) = 0$) if and only if $\mathfrak{J}_i$ is both instantly decodable for a receiver $i$ and contains a single source packet from $\mathcal{W}_i$ (packet is innovative for receiver $i$). Define $\mathfrak{J}_i$ as the targeted packet for receiver $i$ in the transmission $\mathfrak{J}$ ( $\mathfrak{J}_i= \varnothing$ if receiver $i$ is not targeted in this transmission). Since receivers with empty Wants sets will never experience a decoding delay, we will only consider, in the rest of the section, receivers with non-empty Wants sets. Let $M_w$ be the set of these receivers.

The decoding delay increase depends not only on the sent packet $\mathfrak{J}$ but also on the channel states of the targeted receivers. We define $\tau$ as the set of targeted receivers by the transmission $\mathfrak{J}$ where the intended packet $\mathfrak{J}_i$ is a primary packet for receiver $i,~ \forall~ i \in \tau$. Let $\widehat{\tau} = M_w \setminus \tau$ be the set of non-targeted receivers and those targeted by a secondary packet.

For each receiver $i$, its Wants set is assumed to be only one of the following three options:
\begin{enumerate}
\item Non-Uncertain Wants set (i.e. $\mathcal{W}_i(t) \cap \mathcal{X}_i(t) = \varnothing$ ).
\item Partially Uncertain Wants set (i.e. $\mathcal{W}_i(t) \cap \mathcal{X}_i(t) \neq \varnothing$ and $\mathcal{W}_i(t) \not\subset \mathcal{X}_i(t)$ ).
\item Fully Uncertain Wants set (i.e. $\mathcal{W}_i(t) \subset \mathcal{X}_i(t)$).
\end{enumerate}

Let $F$ be the set of receivers having fully uncertain Wants sets and $U$ the set of targeted receivers for which intended packet state in the transmission $\mathfrak{J}$ is unknown (i.e. $f_{i\mathfrak{J}_i}(t)=x,~ \forall~ i \in U $). The following theorem gives the expected decoding delay for a receiver $i$ with non-empty Wants set.
\begin{theorem}
The probability that receiver $i$ with non-empty Wants set experiences a decoding delay at time $t$, after the transmission  $\mathfrak{J}$ is:
\begin{align}
&\mathds{P}(d_i(\mathfrak{J},t) = 1) \nonumber \\
&= 
\begin{cases}
1-p_i(t)&  i \in (\widehat{\tau} \cap \overline{F}) \\
(1-p_i(t))(1-p_{i,f}(t))&i \in (\widehat{\tau} \cap F) \\
0&  i \in (\tau \cap \overline{U}) \\
(1-p_i(t))(1-p_{i,n}(\mathfrak{J}_i,t))&  i \in (\tau \cap (U \setminus F)) \\
(1-p_i(t)) \\
\qquad \times (1-p_{i,n}(\mathfrak{J}_i,t)-p_{i,f}(t)) &i \in (\tau \cap F) ,
\end{cases}
\end{align}
where $p_{i,n}(j,t)$ (referred to as innovative probability) is the probability that packet $j$ is innovative for receiver $i$ at time $t$ and $p_{i,f}(t)$ (referred to as finish probability) is the probability that receiver $i$ successfully received all its primary packets but $\mathcal{W}_i(t) \neq \varnothing$ at the sender due to intermittence and loss of feedback.
\label{th2}
\end{theorem}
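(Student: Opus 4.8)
The plan is to reduce $\mathds{P}(d_i(\mathfrak{J},t)=1)$ to a product of two essentially independent factors: the probability that the forward transmission actually reaches receiver $i$, and the conditional probability that the received packet is useless. By the decoding-delay definition, a receiver with non-empty (true) Wants set suffers a unit delay precisely when it \emph{receives} a packet that is either non-instantly decodable, or instantly decodable but non-innovative. If the packet is erased the receiver obtains nothing, so there is no delay increment; hence the reception event, occurring with probability $1-p_i(t)$ by \thref{th1}, is a necessary factor in every non-trivial case. Since the forward erasure at time $t$ depends only on the channel state process and is independent of the past reception pattern that determines whether the packet is innovative or whether the receiver has secretly finished, I would factor $(1-p_i(t))$ out and analyse the remaining conditional probability case by case.

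First I would treat the receivers in $\widehat{\tau}$, which are not targeted by a primary packet. For such a receiver the transmission delivers at best a secondary or already-owned packet, so a received packet is always useless with respect to the Wants set. If $i\in\overline{F}$ its Wants set is not fully uncertain, so at least one genuinely wanted packet exists and a delay is certain upon reception, giving $1-p_i(t)$. If instead $i\in F$ the whole Wants set is uncertain, so with probability $p_{i,f}(t)$ the receiver has in fact already collected all its primary packets and, having an empty true Wants set, never incurs delay; conditioning on the complementary event $1-p_{i,f}(t)$ yields $(1-p_i(t))(1-p_{i,f}(t))$.

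Next I would handle the targeted receivers in $\tau$, whose intended packet $\mathfrak{J}_i$ is primary. When $i\in\overline{U}$ the state $f_{i\mathfrak{J}_i}(t)$ is known, forcing $f_{i\mathfrak{J}_i}(t)=1$, so the packet is certainly innovative and instantly decodable: no delay ever occurs and the probability is $0$. When the state is unknown, $i\in U$, the packet is innovative with probability $p_{i,n}(\mathfrak{J}_i,t)$, in which case the transmission is useful; only the complementary event $1-p_{i,n}(\mathfrak{J}_i,t)$, where the packet is instantly decodable but already owned, produces a delay, giving $(1-p_i(t))(1-p_{i,n}(\mathfrak{J}_i,t))$ for $i\in U\setminus F$ (here $i\notin F$ guarantees a genuinely non-empty Wants set, so no finish correction is needed). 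Finally, for $i\in\tau\cap F$ both escape routes are possible: the packet may be innovative ($p_{i,n}$) or the receiver may have finished ($p_{i,f}$). The key point to verify is that these two events are mutually exclusive, since if the receiver has received all primary packets then the primary packet $\mathfrak{J}_i$ cannot be innovative; their probabilities therefore add, and a delay arises only on the remaining event of probability $1-p_{i,n}(\mathfrak{J}_i,t)-p_{i,f}(t)$, yielding the last line.

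The main obstacle I anticipate is rigorously justifying the independence that licenses the factor $(1-p_i(t))$: the erasure of the current transmission is governed by the forward GEC state at time $t$, whereas innovativeness and the finish event are determined by the reception history summarised in the feedback matrix and the uncertain sets, so I would argue that conditioning on that history leaves the time-$t$ forward erasure distributed as in \thref{th1} independently of $p_{i,n}$ and $p_{i,f}$. The secondary subtlety is the mutual-exclusivity argument in the $\tau\cap F$ case needed to legitimise subtracting rather than invoking inclusion--exclusion, together with verifying that $p_{i,n}(\mathfrak{J}_i,t)+p_{i,f}(t)\le 1$ so the resulting expression remains a valid probability.
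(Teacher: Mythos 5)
Your proposal is correct and follows the same five-case decomposition as the paper's proof (Appendix C): factor out the reception probability $(1-p_i(t))$, then argue the conditional uselessness case by case; your treatment of the two $\widehat{\tau}$ cases and of $\tau\cap\overline{U}$ and $\tau\cap(U\setminus F)$ coincides with the paper's. The one genuine difference is the $\tau\cap F$ case. The paper reaches $1-p_{i,n}(\mathfrak{J}_i,t)-p_{i,f}(t)$ by sequential conditioning: first on the innovativeness event $\mathfrak{I}_i^{\mathfrak{J}_i}$, then on the finish event given non-innovativeness, writing
\begin{align}
\mathds{P}\bigl(\overline{\mathfrak{F}_i}\,\big|\,\overline{\mathfrak{I}_i^{\mathfrak{J}_i}}\bigr) = 1-\prod_{k \in (\mathcal{W}_i \setminus \mathfrak{J}_i)}\bigl(1-p_{i,n}(k,t)\bigr),
\end{align}
and recombining via the product formula of \thref{th4}; this step implicitly relies on independence of innovativeness across the packets of $\mathcal{W}_i$. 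Your route instead observes that ``$\mathfrak{J}_i$ is innovative'' and ``receiver $i$ has finished'' are mutually exclusive (finishing entails having received the primary packet $\mathfrak{J}_i$), so $\mathds{P}(\text{not innovative and not finished}) = 1-p_{i,n}(\mathfrak{J}_i,t)-p_{i,f}(t)$ by inclusion--exclusion. This is cleaner and slightly more general: it uses only a logical implication rather than the explicit product form of $p_{i,f}$, and it yields $p_{i,n}+p_{i,f}\le 1$ for free. One caveat: the independence concern you flag as your ``main obstacle'' (that in a GEC the erasure at time $t$ is correlated with the past erasures that determine innovativeness and finishing) applies equally to the paper, which simply asserts ``since the events are independent'' without proof; so while this point deserves the justification you sketch, your proposal is not at a disadvantage relative to the paper's own argument.
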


\begin{proof}
The proof can be found in \appref{ap3}.
\end{proof}

In order to derive the expressions of the innovative and finish probabilities, we introduce the following variables: Let $\mathcal{K}_{ij}$ be the set of indices of the frames in which packet $j$ was attempted to receiver $i$ since the last time the sender received feedback from this receiver, excluding the current frame. Define $\mathcal{X}_i^d(n)$ as the set of times where a packet was attempted to receiver $i$ during frame number $n$. In other words:
\begin{align}
\mathcal{X}_i^d(n) = \bigcup_{j \in \mathcal{W}_i(n \times T_f)}\lambda_{ij}(n),~ \forall~ n \in \mathds{N}^+.
\end{align}

Given these definitions, we can introduce the following two theorems, illustrating the expressions for the innovation and finish probabilities.
\begin{theorem}
The probability that packet $j$ is innovative for receiver $i$, at time $t$ is:
\begin{align}
&p_{i,n}(j,t) = \langle \overline{\prod_{k \in \lambda_{ij}(n^+(t))}} p_i(k)   \nonumber\\
& \times \overline{\prod_{k \in \mathcal{K}_{ij}}}   \left\{ \left( \prod_{s \in \mathcal{X}_i^d(k)} p_i(s)  +  \prod_{s \in \lambda_{ij}(k)} p_i(s)  \right. \right. \nonumber\\
& \left. {} \left. {} \times (1- \prod_{s \in \mathcal{X}_i^d(k) \setminus \lambda_{ij}(k)} p_i(s) )q_i(u_i(k)) \right) \right. \nonumber\\
& \left. {} \times \left( \prod_{s \in \mathcal{X}_i^d(k)} p_i(s)  + (1- \prod_{s \in \mathcal{X}_i^d(k)} p_i(s) )q_i(u_i(k)) \right)^{-1} \right\} \nonumber\\
&  \qquad \ \forall~ j \in  \mathcal{W}_i(t),~ \forall~ i \in \tau , \forall~ t \in \mathds{N}^+,
\label{innovative}
\end{align}
where 
\begin{align}
\overline{\prod_{x \in X}} (.)   = 
\begin{cases}
\prod\limits_{x \in X} (.) &\text{if } X \neq \varnothing \\
1  &\text{if } X = \varnothing .
\end{cases} 
\end{align} 
\label{th3}
\end{theorem}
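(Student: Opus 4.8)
The plan is to read $p_{i,n}(j,t)$ as the posterior probability, from the sender's viewpoint, that packet $j$ has not yet been successfully received by receiver $i$, conditioned on everything the sender has observed about $i$ — namely that no feedback has been heard from $i$ in any of the frames collected in $\mathcal{K}_{ij}$ (by definition the frames since the last heard feedback in which $j$ was attempted). First I would record that, by the notion of \emph{innovative} in \sref{sec:formulation}, packet $j$ is innovative for $i$ exactly when $j$ was erased in every one of its attempts so far; hence the target quantity is $\mathds{P}(\text{$j$ erased in all attempts} \mid \text{no feedback heard in the frames of } \mathcal{K}_{ij})$, together with the as-yet unacknowledged attempts in the current frame $n^+(t)$.

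Next I would separate the current frame from the past frames. The single-slot loss probabilities $p_i(\cdot)$ furnished by \thref{th1} already fold in the Gilbert--Elliott memory, so I treat the per-frame non-reception events as (conditionally) independent given these probabilities; the joint probability then factorizes into a current-frame factor times a product over $k \in \mathcal{K}_{ij}$, with empty index sets absorbed by the $\overline{\prod}$ convention stated in the theorem. In the current frame no feedback has yet been processed (it arrives in the uplink sub-frame), so there is nothing to condition on and its contribution is simply the probability that $j$ is erased in each of its attempt slots, $\overline{\prod}_{k \in \lambda_{ij}(n^+(t))} p_i(k)$.

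For each past frame $k \in \mathcal{K}_{ij}$ I would apply Bayes' rule, writing the factor as $\mathds{P}(\text{$j$ not received} \wedge \text{no feedback}) / \mathds{P}(\text{no feedback})$. Invoking the feedback protocol of \sref{sec:chmodel} — only a receiver that decodes at least one packet sends feedback, and that feedback is itself erased with probability $q_i(u_i(k))$ — the denominator splits into the disjoint events ``all attempts in $\mathcal{X}_i^d(k)$ erased'' and ``at least one packet received but its feedback erased,'' giving $\prod_{s \in \mathcal{X}_i^d(k)} p_i(s) + \bigl(1-\prod_{s \in \mathcal{X}_i^d(k)} p_i(s)\bigr)\,q_i(u_i(k))$. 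The numerator additionally forces $j$ to be erased: either nothing at all is received (which already entails $j$ erased), or $j$ is erased in all of $\lambda_{ij}(k)$ while some packet in $\mathcal{X}_i^d(k)\setminus\lambda_{ij}(k)$ gets through and triggers a subsequently erased feedback, producing precisely the numerator appearing in \eref{innovative}. Assembling the current-frame factor with the product of these per-frame ratios reproduces the claimed expression.

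The main obstacle I anticipate is the per-frame Bayesian bookkeeping: carving the ``no feedback heard'' event into genuinely disjoint cases, and, in the numerator, correctly isolating the sub-event that $j$ is erased while another packet is received as the intersection of the event that $j$ is erased with the complement of the event that all non-$j$ attempts are erased — this is where an inclusion or sign slip is easiest. A secondary subtlety is justifying the cross-frame factorization itself, which rests on the memory correlations being already absorbed into the $p_i(\cdot)$ of \thref{th1}, so that conditioning frame-by-frame is legitimate; I would flag this assumption explicitly rather than derive it, since it is the conceptual pivot that makes the product form in \eref{innovative} exact.
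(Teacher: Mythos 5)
Your proposal takes essentially the same route as the paper's proof in \appref{ap4}: the same factorization into a current-frame term $\overline{\prod_{k \in \lambda_{ij}(n^+(t))}} p_i(k)$ and, for each past frame $k \in \mathcal{K}_{ij}$, the same Bayes ratio whose denominator is the probability of an unheard feedback and whose numerator additionally forces $j$ to be erased in all of $\lambda_{ij}(k)$. The only difference is organizational: the paper carries out the cross-frame step as a recursion on the event that $j$ is still innovative at the beginning of each frame (relying on the same slot-wise independence of the $p_i(\cdot)$ terms that you explicitly flag), so your direct product of per-frame posterior ratios is the unrolled form of the paper's argument.
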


\begin{proof}
The proof can be found in \appref{ap4}.
\end{proof}

\begin{theorem}
The probability that receiver $i$ successfully received all his primary packets but $\mathcal{W}_i(t) \neq \varnothing$ at time $t$ is:
\begin{align}
p_{i,f}(t)  &= \prod_{j \in \mathcal{W}_i(t)} \left( 1- p_{i,n}(j,t) \right) , ~ \forall~ i \in \tau,~ \forall~ t \in \mathds{N}^+.
\end{align}
\label{th4}
\end{theorem}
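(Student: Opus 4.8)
The plan is to unfold the definition of the finish probability into a statement purely about the reception of the individual wanted packets, and then to factor the resulting joint event using the per-packet innovation probabilities already supplied by Theorem~\ref{th3}.

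First I would recall that, from the sender's viewpoint, $\mathcal{W}_i(t)$ is exactly the set of primary packets that receiver $i$ is still believed to need. The finish event---receiver $i$ has in fact obtained all of its primary packets while the sender still records $\mathcal{W}_i(t) \neq \varnothing$---is therefore equivalent to the event that, for every $j \in \mathcal{W}_i(t)$, receiver $i$ has already successfully received $j$, the sole reason the sender is unaware being the feedback intermittence and loss. Crucially, this feedback uncertainty is precisely what the innovation probability $p_{i,n}(j,t)$ of Theorem~\ref{th3} already encodes, so no additional feedback bookkeeping is needed here. Note also that for $i \in \tau$ the condition $\mathcal{W}_i(t) \neq \varnothing$ holds automatically, so the finish event reduces cleanly to ``all of $\mathcal{W}_i(t)$ already received.''

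Next I would translate ``receiver $i$ has already received packet $j$'' into the complement of ``packet $j$ is innovative for receiver $i$ at time $t$.'' By the definition of $p_{i,n}(j,t)$, the packet $j$ fails to be innovative---i.e. it has already been decoded---with probability $1 - p_{i,n}(j,t)$. Writing the finish event as the intersection over all $j \in \mathcal{W}_i(t)$ of these single-packet ``already received'' events and invoking independence across distinct packets then yields
\begin{align}
p_{i,f}(t) = \prod_{j \in \mathcal{W}_i(t)} \left( 1 - p_{i,n}(j,t) \right),
\end{align}
as claimed.

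The step I expect to be the main obstacle is justifying that the single-packet events factorize into a product. Distinct packets are attempted in disjoint collections of time slots $\lambda_{ij}(\cdot)$, but the underlying Gilbert-Elliott channel has memory, so erasures in different slots are correlated through the Markov state; a fully rigorous argument must therefore appeal to the same marginalization of the channel state used to obtain the per-slot loss probabilities $p_i(\cdot)$ in Theorem~\ref{th1} and the product structure already established in Theorem~\ref{th3}. Once reception of each wanted packet is expressed through those marginal quantities, the events become (conditionally) independent given the known feedback history, and the product form follows. I would make this dependence on the earlier theorems explicit rather than re-deriving the channel analysis, so that the proof of this statement stays a short combinatorial assembly of results already in hand.
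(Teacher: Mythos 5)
Your proposal is correct and follows essentially the same route as the paper: it identifies the finish event with the joint event that every packet in $\mathcal{W}_i(t)$ has already been received but not yet acknowledged, converts each single-packet event into the complement $1-p_{i,n}(j,t)$ of the innovation probability, and multiplies these factors (with the same implicit independence-across-packets assumption the paper makes silently). The only difference is presentational: the paper explicitly splits into the cases $i \in \tau \cap F$ and $i \in \tau \setminus F$ (where $p_{i,f}(t)=0$ because some known-missing packet $j^*$ has $p_{i,n}(j^*,t)=1$), whereas you absorb both cases into the single product formula, which is valid since that factor vanishes automatically.
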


\begin{proof}
The proof can be found in \appref{ap5}.
\end{proof}

\subsection{Minimum Decoding Delay Formulation}

In order to minimize the decoding delay in G-IDNC, we look for all possible packets combinations that are instantly decodable for a subset (possibly all) of the receivers, and choose the one that has minimum expected increase in the decoding delay. To represent all these XOR packet combination, we use the G-IDNC graph proposed in \cite{ref2}. Let $\mathcal{G}(\mathcal{V},\mathcal{E})$ be the G-IDNC graph and let $v_{ij}$ be the vertices such that $\mathcal{V} = \{ v_{ij}, ~ \forall ~ j \in \mathcal{L}_i(t) ~ \mbox{and} ~ \forall i \in \mathcal{M}\}$. In other words, the vertex set of the G-IDNC graph includes a vertex $v_{ij}$ for every non-zero entry $f_{ij}(t)$ in the SFM.

Two edges $v_{ij}$ and $v_{kl}$ in $\mathcal{V}$ are connected if one of these scenarios occur:
\begin{itemize}
\item $j=l$: The receivers $i$ and $k$ are requesting the same packet $j$.
\item $j \in \mathcal{H}_k(t)$ and $l \in \mathcal{H}_i(t)$: The needed packet of each vertex is in the Has set of the receiver represented by the other vertex.
\end{itemize}

According to \cite{arg1}, minimizing the decoding delay in G-IDNC is equivalent to solving the maximum weight clique problem in the corresponding G-IDNC graph. Let $\kappa(t)$ be a maximal clique in the G-IDNC graph chosen for the transmission at time $t$. We define the packet combination $\mathfrak{J}[\kappa(t)]$ as the XOR of the union of $j$'s such that $v_{ij}\in\kappa(t)$.

The following theorem gives the minimum decoding problem formulation for G-IDNC graph in lossy intermittent feedback scenario.

\begin{theorem}
The minimum decoding delay problem for G-IDNC in lossy intermittent feedback scenario can be formulated as:
\begin{align}
\kappa^*(t) &=  \underset{\kappa(t) \in \mathcal{G}}{\text{argmax}}  \sum_{i \in \tau(\kappa(t))} (1-p_i(t)) \times \left(p_{i,n}(\mathfrak{J}_i(\kappa(t)),t) \right),
\end{align}
where $\tau(\kappa(t))$ is the set of receivers targeted by a primary packet and $\mathfrak{J}_i(\kappa(t))$ is the targeted packet in the transmission $\kappa(t)$ to receiver $i$.
\label{th5}
\end{theorem}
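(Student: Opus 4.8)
The plan is to compute the total expected decoding-delay increment over all receivers with non-empty Wants set, $D(\mathfrak{J},t)=\sum_{i\in M_w}\mathds{P}(d_i(\mathfrak{J},t)=1)$, substitute the five-case expression from \thref{th2}, and then show that minimizing this sum over admissible transmissions collapses to the stated maximization. First I would split $M_w$ according to the partition used in \thref{th2}, namely into the fully-uncertain set $F$ and its complement $\overline{F}$, observing that this partition is fixed by the current SFM and does not depend on which combination $\mathfrak{J}$ we transmit. Under this split, each receiver we target lies in $\tau$ and each one we do not lies in $\widehat{\tau}$, and only the assignment of receivers to $\tau$ (together with the choice of their intended packets $\mathfrak{J}_i$) is under our control.

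Next I would absorb the $\tau\cap\overline{U}$ case into the general targeted expression. A receiver in $\overline{U}$ has $f_{i\mathfrak{J}_i}(t)=1$, so its intended packet is certainly wanted and hence certainly innovative, giving $p_{i,n}(\mathfrak{J}_i,t)=1$; the quantity $(1-p_i(t))(1-p_{i,n}(\mathfrak{J}_i,t))$ then equals the $0$ reported in \thref{th2}. Thus every receiver in $\tau$, whether in $\overline{U}$, in $U\setminus F$, or in $\tau\cap F$, can be written uniformly, and I would rearrange $D(\mathfrak{J},t)$ as a transmission-independent baseline plus a choice-dependent correction, where the baseline
\begin{align}
D_0 &= \sum_{i\in\overline{F}}(1-p_i(t)) + \sum_{i\in F}(1-p_i(t))(1-p_{i,f}(t))
\end{align}
is exactly the delay incurred when no receiver is targeted.

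The key computation is the per-receiver reduction obtained by moving a receiver from $\widehat{\tau}$ into $\tau$. For $i\in\overline{F}$ this reduction is $(1-p_i(t)) - (1-p_i(t))(1-p_{i,n}(\mathfrak{J}_i,t)) = (1-p_i(t))\,p_{i,n}(\mathfrak{J}_i,t)$, while for $i\in F$ it is $(1-p_i(t))(1-p_{i,f}(t)) - (1-p_i(t))(1-p_{i,n}(\mathfrak{J}_i,t)-p_{i,f}(t)) = (1-p_i(t))\,p_{i,n}(\mathfrak{J}_i,t)$, where the $p_{i,f}(t)$ terms cancel so that the two cases yield the same gain. Hence $D(\mathfrak{J},t) = D_0 - \sum_{i\in\tau}(1-p_i(t))\,p_{i,n}(\mathfrak{J}_i,t)$, and since $D_0$ is constant across transmissions, minimizing $D$ is equivalent to maximizing $\sum_{i\in\tau}(1-p_i(t))\,p_{i,n}(\mathfrak{J}_i,t)$.

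Finally I would translate this maximization over feasible instantly-decodable transmissions into a maximum-weight clique problem on the G-IDNC graph. Invoking the equivalence from \cite{arg1}, every admissible combination $\mathfrak{J}$ corresponds to a clique $\kappa(t)$, with a vertex $v_{i\mathfrak{J}_i}\in\kappa(t)$ encoding that receiver $i$ is targeted by its primary packet $\mathfrak{J}_i$, so that $\tau(\kappa(t))$ and $\mathfrak{J}_i(\kappa(t))$ are well defined. Assigning vertex $v_{i\mathfrak{J}_i}$ the weight $(1-p_i(t))\,p_{i,n}(\mathfrak{J}_i(\kappa(t)),t)$ makes the objective the total clique weight, which yields the stated formulation. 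The main obstacle I anticipate is the bookkeeping that justifies the clean cancellation: one must verify that $F$, $\overline{F}$, and the baseline $D_0$ are genuinely invariant under the transmission choice, and that the uniform treatment of the $\overline{U}$ receivers through $p_{i,n}=1$ is consistent with the vertex-weight assignment, so that no targeted receiver is double-counted or omitted.
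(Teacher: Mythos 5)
Your proof is correct and takes essentially the same route as the paper's: both start from the five-case decomposition of \thref{th2}, use the cancellation of the $p_{i,f}(t)$ terms together with the fact that $p_{i,n}(\mathfrak{J}_i,t)=1$ for targeted receivers in $\overline{U}$, and reduce the minimization to maximizing $\sum_{i\in\tau}(1-p_i(t))\,p_{i,n}(\mathfrak{J}_i,t)$ over cliques. Your ``transmission-independent baseline plus per-receiver gain'' framing is simply a cleaner packaging of the paper's argmin-to-argmax manipulation, where the same constant terms are dropped.
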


\begin{proof}
The proof can be found in \appref{ap6}.
\end{proof}
In other words, the minimum decoding delay problem can be formulated as a maximum weight clique problem where the weight of vertex $v_{ij}$ can be expressed as:
\begin{align}
w_{ij}^0(t) = (1-p_i(t))p_{i,n}(j,t).
\label{w0}
\end{align}

\section{Special Cases}\label{sec:cases}

In this section, we use our proposed generalized formulation to extract the distributions and problem formulations for the following seven special cases of our more general problem.

\subsection{Forward and Feedback PECs without Feedback Intermittence} 

In the forward and feedback PECs without Feedback intermittence case, the length of the frame $T_f = 2$. After each sent packet a feedback can be heard upon successful reception (i.e. $T_d=T_u=T_{u_i}=1$). The following theorem gives the simplified expression of the packet erasure and innovative probability in lossy feedback without intermittence.

\begin{theorem}
The packet erasure probability in the probabilistic feedback scenario is:
\begin{align}
p_i(t) &= \sum_{l=0}^{t-t_i^{(0)}-1} \mu_i^lb_i^p .
\end{align}
The probability that packet $j$ is innovative to receiver $i$ at time $t$ in the probabilistic feedback scenario is:
\begin{align}
p_{i,n}(j,t) &= \overline{\prod_{k \in \mathcal{K}_{ij}}}    \cfrac{p_i(2k-1)}{p_i(2k-1)+(1-p_i(2k-1))q_i(2k)} \nonumber \\
&~ \forall~ j \in \mathcal{W}_i(t),~ \forall~ i \in \tau ,~ \forall~ t \in \mathds{N}^+.
\label{sp1}
\end{align}
\label{th6}
\end{theorem}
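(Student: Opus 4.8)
The plan is to derive Theorem~\ref{th6} as a direct specialization of the two general results already established, namely Theorem~\ref{th1} for the loss probability $p_i(t)$ and Theorem~\ref{th3} for the innovative probability $p_{i,n}(j,t)$, by substituting the specific frame parameters of the no-intermittence regime. Concretely, the defining feature of this special case is $T_f = 2$ with $T_d = T_u = T_{u_i} = 1$, meaning each downlink packet is immediately followed by its own uplink slot, so the sender can (attempt to) hear feedback after every transmission. I would begin by recording these parameter values and tracing their consequences through the frame-indexing machinery: since each frame contains exactly one downlink slot, the time $t$ and the frame index $n^+(t)$ satisfy $t = 2n^+(t) - 1$ for downlink slots and $t = 2n^+(t)$ for uplink slots, which is what produces the arguments $2k-1$ and $2k$ appearing in \eref{sp1}.

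For the erasure probability, the first step is to observe that with no intermittence the distinction in Theorem~\ref{th1} between the $f_{ij_i^{(0)}}(t_i^*) = 0$ and $f_{ij_i^{(0)}}(t_i^*) = 1$ branches collapses in the relevant direction: because feedback is heard after every packet, the reference packet $j_i^{(0)}$ and its last-attempt time $t_i^{(0)}$ reduce to the most recent transmission to receiver $i$, and the channel-state conditioning that generated the $g_i^p$ branch no longer applies in the same way. I would argue that the surviving expression is the $b_i^p$ branch, giving $p_i(t) = \sum_{l=0}^{t-t_i^{(0)}-1} \mu_i^l b_i^p$ directly. The key is to justify that $t_i^*$ and the two-frame structure degenerate so that only the first case of Theorem~\ref{th1} carries over.

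For the innovative probability, the main work is to simplify the product formula \eref{innovative} under the substitutions $T_f=2$ and $T_{u_i}=1$. First I would note that, with one downlink slot per frame, the set $\lambda_{ij}(n)$ of attempt-times is at most a singleton, so each interior product $\prod_{s \in \lambda_{ij}(k)} p_i(s)$ and $\prod_{s \in \mathcal{X}_i^d(k)} p_i(s)$ reduces to a single factor evaluated at $t = 2k-1$. In particular $\mathcal{X}_i^d(k) = \lambda_{ij}(k)$ collapses to a single attempt per frame under the single-packet-per-receiver accounting, so the difference set $\mathcal{X}_i^d(k) \setminus \lambda_{ij}(k)$ becomes empty and its product equals $1$; this kills the $(1 - \cdots)$ correction term in the numerator. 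The leading factor $\overline{\prod_{k \in \lambda_{ij}(n^+(t))}} p_i(k)$ similarly simplifies, and the uplink feedback time $u_i(k)$ evaluates to $2k$. Carrying these through, the ratio of the two bracketed factors collapses to $p_i(2k-1) \big/ \bigl(p_i(2k-1) + (1-p_i(2k-1)) q_i(2k)\bigr)$, which is exactly \eref{sp1}.

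The hard part will be rigorously justifying the cardinality reductions, specifically that $|\mathcal{X}_i^d(k)| = 1$ and $\mathcal{X}_i^d(k) = \lambda_{ij}(k)$ for the relevant $k$, since the general formula was written to accommodate multiple attempts per frame and a nontrivial difference set; one must invoke the single-attempt assumption (the standing hypothesis that each targeted receiver has a packet attempted only once since the last heard feedback) together with $T_d = 1$ to guarantee these sets are singletons and coincide. I would treat this set-combinatorics bookkeeping as the crux, after which the algebraic collapse to \eref{sp1} is routine cancellation in the ratio of the two bracketed terms.
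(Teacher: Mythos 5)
Your overall route is exactly the paper's: specialize \thref{th1} and \thref{th3} to $T_f=2$, $T_d=T_u=T_{u_i}=1$, use the singleton structure of $\lambda_{ij}(k)$ and $\mathcal{X}_i^d(k)$ for $k\in\mathcal{K}_{ij}$ to kill the correction term in the numerator, and evaluate $u_i(k)=2k$; the ratio algebra you describe is correct. However, the two steps that carry the real content of the proof are left unjustified, and in both cases the reason you hint at is off.

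First, the branch selection in \thref{th1}. You assert that ``the surviving expression is the $b_i^p$ branch'' and defer the argument, offering only that ``feedback is heard after every packet'' --- which is false in this model: feedback is sent only upon successful reception, and the feedback channel is itself lossy. The correct reason the $g_i^p$ branch disappears is this: with a single packet per downlink sub-frame, any feedback the sender hears can only be an acknowledgement of a successful reception, so the sender's most recent channel-state anchor is necessarily a Good state, i.e.\ $f_{ij_i^{(0)}}(t_i^*)=0$. The case $f_{ij_i^{(0)}}(t_i^*)=1$ --- which in the general model arises when a feedback triggered by \emph{other} packets of the same frame reveals that packet $j_i^{(0)}$ was erased --- cannot occur here, because there are no other packets in the frame; if the packet is lost, no feedback is sent at all. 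This is also why the paper notes that the standing single-attempt constraint can simply be \emph{removed} in this configuration, so your claim that the cardinality bookkeeping must ``invoke the single-attempt assumption'' is backwards: $T_d=1$ alone forces $|\mathcal{X}_i^d(k)|\leq 1$ and $\mathcal{X}_i^d(k)=\lambda_{ij}(k)=\{2k-1\}$ for $k\in\mathcal{K}_{ij}$, and that reduction is routine rather than the crux.

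Second, the leading factor $\overline{\prod_{k \in \lambda_{ij}(n^+(t))}} p_i(k)$ in \eref{innovative}, which you say ``similarly simplifies.'' It cannot simplify \emph{similarly} --- i.e.\ to a singleton factor $p_i(2n^+(t)-1)$ like the interior products --- or the final expression would carry a spurious multiplicative $p_i$ term and would not equal \eref{sp1}. It must equal $1$, and the reason is different in kind from the other reductions: $p_{i,n}(j,t)$ is evaluated at the decision instant $t=2k+1$, the beginning of the frame, before anything has been transmitted in that frame, so $\lambda_{ij}(n^+(t))=\varnothing$ and the overlined product is $1$ by the empty-set convention. This observation is precisely what the paper's proof supplies and your proposal omits; without it the derivation does not close to the stated formula.
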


\begin{proof}
The proof can be found in \appref{ap7}.
\end{proof}

\subsection{Forward Only PECs with Feedback Intermittence}

In the forward only PECs with feedback intermittence case, the reception of the feedback is considered perfect. In other words $q_i(t)=0, ~ \forall~ i \in \mathcal{M}, ~\forall~ t \in \mathds{N}^+$. By substituting in Equation \eref{innovative}, the probability that packet $j$ is innovative to receiver $i$ at time $t$ becomes:
\begin{align}
p_{i,n}(j,t) &= \overline{\prod_{k \in \lambda_{ij}(n^+(t))}} p_i(k)   \nonumber \\ 
&~ \forall~ j \in \mathcal{W}_i(t),~ \forall~ i \in \tau ,~ \forall~ t \in \mathds{N}^+.
\label{sp2}
\end{align}

\subsection{Forward Only PECs with Perfect Feedback}

In the forward only PECs with perfect feedback case, the length of the frame $T_f = 2$ and the reception of the feedback is considered perfect. In other words, only the data packets are subjected to loss. This case can be derived by taking the probability of loosing the feedback ($q_i(t)=0,~\forall~i \in \mathcal{M},~\forall~t \in \mathds{N}^+$) in the Equation \eref{sp1} equal to zero. Thus the probability of a packet to be innovative is $1$.
The problem can be formulated as a maximum weight clique problem as follows:
\begin{align}
\kappa^*(t) &=  \underset{\kappa(t) \in \mathcal{G}}{\text{argmax}}  \sum_{i \in \tau(\kappa(t))} (1-p_i(t)),
\label{sp3}
\end{align}
in agreement with the expression derived in \cite{ref3}.

\subsection{Forward and Feedback MECs with Feedback Intermittence}

The memory-less channel occurs when the channel changes state in a completely uncorrelated manner. This can be done in our model by setting the channel memory $\mu_i=0,~ \forall~ i \in \mathcal{M}$. Since $\mu=1-g_i-b_i=0$ then we have $g_i+b_i=1$. The transition probabilities become then independent of time:
\begin{align}
&\mathds{P} (C_i^p(t) = G | C_i^p(t-1) = B) = g_i^p &= \mathcal{P}_{G_i^p} \nonumber \\
&\mathds{P} (C_i^p(t) = B | C_i^p(t-1) = B) = 1-g_i^p &= \mathcal{P}_{B_i^p} \nonumber \\
&\mathds{P} (C_i^p(t) = B | C_i^p(t-1) = G) = b_i^p &= \mathcal{P}_{B_i^p} \nonumber \\
&\mathds{P} (C_i^p(t) = G | C_i^p(t-1) = G) = 1-b_i^p &= \mathcal{P}_{G_i^p}.
\end{align}
As a result, the packet and feedback erasure probability becomes:
\begin{align}
p_i(t) = b_i^p = p_i  ,~ \forall~ t \in \mathds{N}^+ \\
q_i(t) = b_i^q = q_i ,~ \forall~ t \in \mathds{N}^+.
\end{align}
In the forward and feedback MECs with feedback intermittence case, the probability that packet $j$ is innovative for receiver $i$ in the lossy intermittent feedback, at time $t$ becomes:
\begin{align}
&p_{i,n}(j) = p_i^{|\lambda_{ij}(n^+(t))|} \nonumber\\
& \times \overline{\prod_{k \in \mathcal{K}_{ij}}}   \cfrac{ \left(  p_i^{|\mathcal{X}_i^d(k)|}  +  p_i^{| \lambda_{ij}(k)|} \right.  \left. {} \times (1- p_i^{|\mathcal{X}_i^d(k)|-| \lambda_{ij}(k)|})q_i \right) } { \left(p_i^{|\mathcal{X}_i^d(k)|}  + (1- p_i^{|\mathcal{X}_i^d(k)|})q_i \right) }\nonumber\\
& \qquad \qquad ~ \forall~ j \in  \mathcal{W}_i,~ \forall~ i \in \tau ,
\end{align}
which is the same expression derived in \cite{refahmed}.

\subsection{Forward and Feedback  MECs without Feedback Intermittence}

In the forward and feedback  MECs without feedback intermittence case, the length of the frame $T_f = 2$. Since the transmission and feedback loss probabilities become independent for the time in the MECs then this special case can be obtained by substitution $p_i(t)=p_i$ and $q_i(t)=q_i,~\forall~t \in \mathds{N}^+$ in the Equation \eref{sp1}. As a result, the probability that packet $j$ is innovative to receiver $i$ at time $t$ in the lossy feedback without intermittence scenario becomes:
\begin{align}
p_{i,n}(j) &= \left(\cfrac{p_i}{p_i+(1-p_i)q_i}\right)^{|\mathcal{K}_{ij}|} \nonumber \\
&~ \forall~ j \in \mathcal{W}_i,~ \forall~ i \in \tau.
\end{align}
in agreement with the expression derived in \cite{refsameh}.

\subsection{Forward Only MECs with Feedback Intermittence}

In the forward only MECs with feedback intermittence case, the feedback loss probability is equal to zero. Since the transmission  loss probabilities become independent for the time in the MECs then this special case can be obtained by substitution $p_i(t)=p_i,~\forall~t \in \mathds{N}^+$ in the Equation \eref{sp2}. Thus, the probability that packet $j$ is innovative to receiver $i$ at time $t$ in the intermittent feedback scenario becomes:
\begin{align}
p_{i,n}(j) &=  p_i^{|\lambda_{ij}|} ,~ \forall~ j \in \mathcal{W}_i,~ \forall~ i \in \tau .
\end{align}
in agreement with the expression derived in \cite{refsameh}.

\subsection{Forward Only MECs with Perfect Feedback}

In the forward only MECs with perfect feedback case, the length of the frame $T_f = 2$ and the reception of the feedback is considered perfect. As for the PECs, the probability for a packet to be innovative, in the forward only MECs with perfect feedback is equal to $1$. The formulation of the maximum weight clique problem in the G-IDNC graph, in this special case, can be obtained by substituting $p_i(t)=p_i,~\forall~t \in \mathds{N}^+$ in the Equation \eref{sp3}. Thus the expression becomes:
\begin{align}
\kappa^* &=  \underset{\kappa \in \mathcal{G}}{\text{argmax}}  \sum_{i \in \tau(\kappa)} (1-p_i) ,
\end{align}
in agreement with the expression derived in \cite{ref2}.

\section{Proposed Greedy Algorithm}\label{sec:greedy}

The maximum weight clique problem in a general graph is well known to be NP-hard \cite{ref15,ref10,ref11}. Since the G-IDNC graph has a special structure, this NP-hardness result is not directly applicable. However, the author in \cite{arg1} have shown that the problem of minimizing the decoding delay for S-IDNC is equivalent to an Integer Quadratic Programming (IQP) problem which is indeed NP-hard. Since S-IDNC is a special case of G-IDNC, the previous result can be extended to G-IDNC.

To overcome this complexity problem, we design in this section a simple heuristic algorithm to solve the problem with $O(M^2N)$ complexity. (i.e. quadratic time in the number of receivers and linear time in the number of packets). This algorithm follows the same concept as the one proposed in \cite{refahmed}, but the way the new weights are computed is different as follows.

To define the news weighs, we first define $A=[a_{ij,kl}]$ as the adjacency matrix associated with the G-IDNC graph $\mathcal{G}(\mathcal{V},\mathcal{E})$ defined as follows:
\begin{align}
a_{ij,kl} = 
\begin{cases}
1 \hspace{0.7 cm} \textit{v}_{ij} \text{ is connected to } \textit{v}_{kl} \text{ in } \mathcal{G} \\
 0 \hspace{0.7 cm} \text{otherwise.}
\end{cases}
\end{align}

Define $w_{ij}(t)$ as the modified weights that take into account the connectivity of the vertex $v_{ij}$, at time $t$, to vertices having high reception probability and lower probability to be non-innovative (lower uncertainty). We define $w_{ij}(t)$ as follows:
\begin{align}
w_{ij}(t) &=  \sum_{ ~ \forall~ \textit{v}_{kl} \in \mathcal{G}} a_{ij,kl}(1-p_k(t)) \times p_{k,n}(l,t) \times \cfrac{\delta(v_{kl})}{|\mathcal{E}|},
\label{w}
\end{align}
where $\delta(v_{ij}) = \sum\limits_{~ \forall~ \textit{v}_{kl}\in \mathcal{G}} a_{ij,kl}$ is the degree of vertex $v_{ij}$ (i.e. the number of vertices adjacent to $v_{ij}$), and $|\mathcal{E}|$ is total number of edges in the graph $\mathcal{G}(\mathcal{V},\mathcal{E})$. We define the new vertexes weight $w_{ij}^*(t)$ as follows:
\begin{align}
w_{ij}^*(t) = (w_{ij}(t)+1) \times w_{ij}^0(t) ,~ \forall~ v_{ij} \in \mathcal{G}, \forall~ t \in \mathds{N}^+.
\label{w*}
\end{align}
Consequently, a vertex will have a higher weight, when:
\begin{itemize}
\item It has a large initial weight (i.e. higher probability of reception and less uncertainty) 
\item It is adjacent to a larger number of vertices with large initial weights.
\end{itemize}

To take into account the multicast characteristic of the system, the algorithm is applied on the sub-graph $\mathcal{G}_p \subseteq \mathcal{G}$, consisting of the primary vertices of the receivers, to obtain $\kappa_p$. Then, the algorithm finds $\kappa_s$ by applying the algorithm another time on the resulting sub-graph $\mathcal{G}_s$, consisting of the secondary vertices of the receivers, which are adjacent to all the vertices of the chosen clique $\kappa_p$. The final served clique is thus $\kappa^* \leftarrow \kappa_p \cup \kappa_s$. 
\begin{algorithm}[t]
\begin{algorithmic}
\REQUIRE $\mathbf{F}$ and $p_i(t),p_{i,n}(j,t),p_{i,f}(t)$ $~\forall~ i\in\mathcal{M}$  $~ \forall~ j\in\mathcal{N}$ 
\STATE Initialize $\kappa_p,\kappa_s =\varnothing$. 
\STATE Construct $\mathcal{G}_p\left(\mathcal{V}_p,\mathcal{E}_p\right)$.
\WHILE{$\mathcal{G}_p \neq \varnothing$} 
\STATE Compute $w_{ij}^0(t), w_{ij}(t)$ and $w_{ij}^*(t)$ using \eref{w0},\eref{w} and \eref{w*}.   
\STATE Select $v_p =\underset{v_{ij}\in\mathcal{G}_p}{\text{argmax}} \left\{w_{ij}^*(t)\right\}$.  
\STATE sets $\kappa_p \leftarrow \kappa_p\cup v_p$. 
\STATE sets $\mathcal{G}_p \leftarrow \mathfrak{R}(\mathcal{G}_p,v_p)$. 
\ENDWHILE
\STATE Construct $\mathcal{G}_s\left(\mathcal{V}_s,\mathcal{E}_s\right)$.
\STATE for each $v_p \in \kappa_p$
\STATE \hspace{0.2 cm} sets $\mathcal{G}_s \leftarrow \mathfrak{R}(\mathcal{G}_s,v_p)$
\WHILE{$\mathcal{G}_s \neq \varnothing$} 
\STATE Compute $w_{ij}^0(t), w_{ij}(t$) and $w_{ij}^*(t)$ using \ref{w0},\ref{w} and \ref{w*}.  
\STATE Select $v_s = \underset{v_{ij}\in\mathcal{G}_s}{\text{argmax}} \left\{w_{ij}^*(t)\right\}$. 
\STATE sets $\kappa_s \leftarrow \kappa_s\cup v_s$. 
\STATE sets $\mathcal{G}_s \leftarrow  \mathfrak{R}(\mathcal{G}_s,v_s)$.  
\ENDWHILE
\STATE sets $\kappa^* \leftarrow \kappa_p \cup \kappa_s$. 
\end{algorithmic}
\caption{Maximum Weight Vertex Search Algorithm}
\label{algo1}
\end{algorithm}
Defining $\mathfrak{R}(\mathcal{G},v_{ij})$ as the sub-graph in $\mathcal{G}$ containing only the vertices connected to $v_{ij}$, the whole algorithm structure is illustrate in Algorithm 1.

\section{Simulation Results}\label{sec:simulation}

\begin{figure}[ht]
\centering
  % Requires \usepackage{graphicx}
  \includegraphics[width=1\linewidth]{./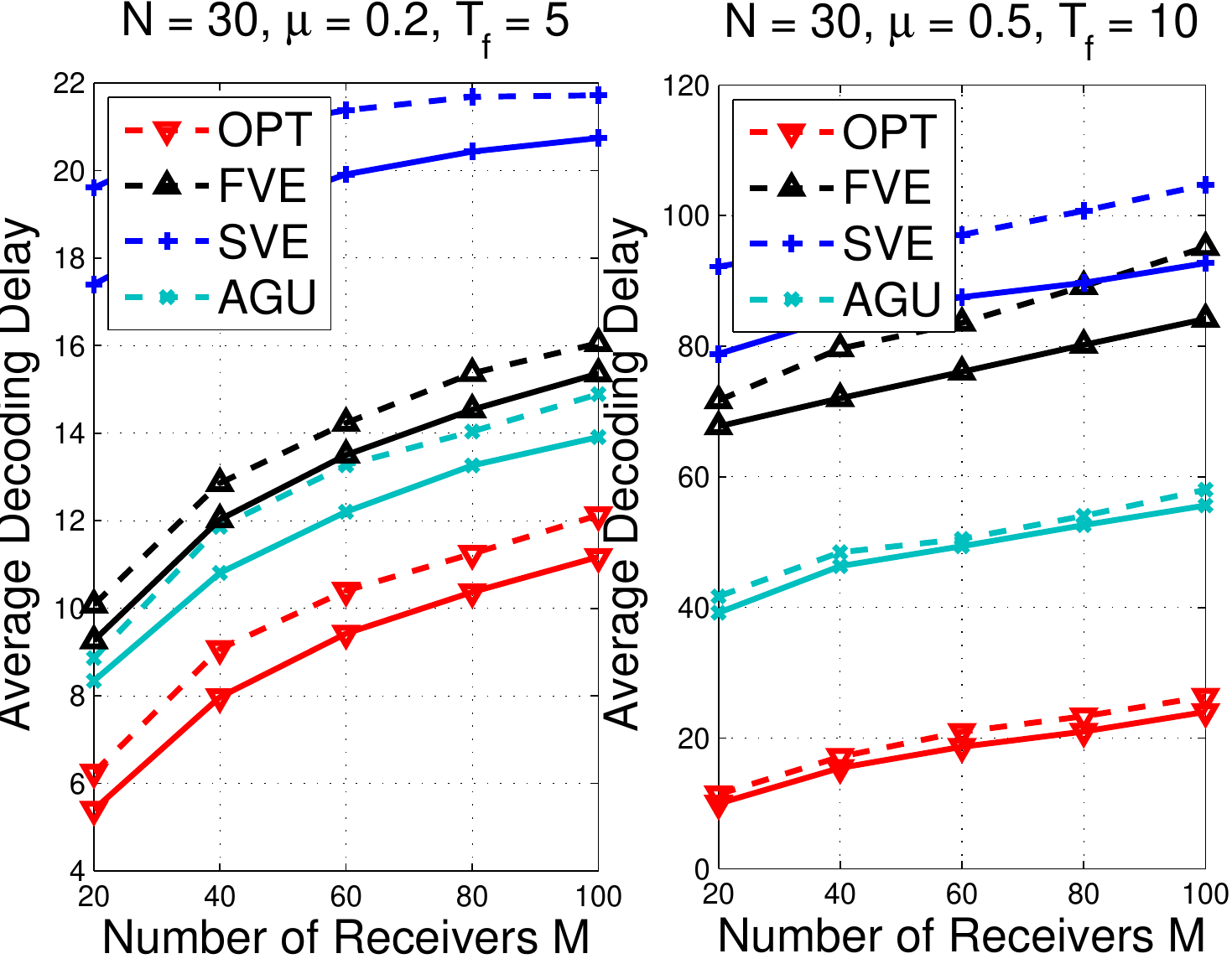}\\
  \caption{Mean decoding delays for intermittent lossy feedback versus $M$.}\label{fig:LOSSYM}
\end{figure}

In this section, we compare the performance of our adaptive algorithm (denoted by AGU), against the two partially blind graph update approaches (denoted by FVE and SVE), proposed in \cite{ref12,ref13}, and the perfect feedback (denoted by OPT), to effectively reduce the G-IDNC decoding delay in persistent erasure channels and over lossy intermittent feedback. We also compare the decoding delay performance achieved by our proposed greedy algorithm (solid line) with the heuristic proposed in \cite{refahmed} (dash line). We assume in these simulations channel reciprocity, which mean that the packet and the feedback erasure probabilities are the same (i.e. $g_i^p = g_i^q$ and $b_i^p = b_i^q ,~ \forall~ i \in \mathcal{M}$). We also assume that the packet erasure probability $b_i^p,~ \forall~ i \in \mathcal{M}$ for all receivers change uniformly in a given range $[0.1\ 0.3]$ from frame to frame while keeping its mean constant for all the simulations (even when varying the channel memory $\mu$). We compute the mean decoding delay over a large number of iterations then the average value is presented.

\begin{figure}[ht]
\centering
  % Requires \usepackage{graphicx}
  \includegraphics[width=1\linewidth]{./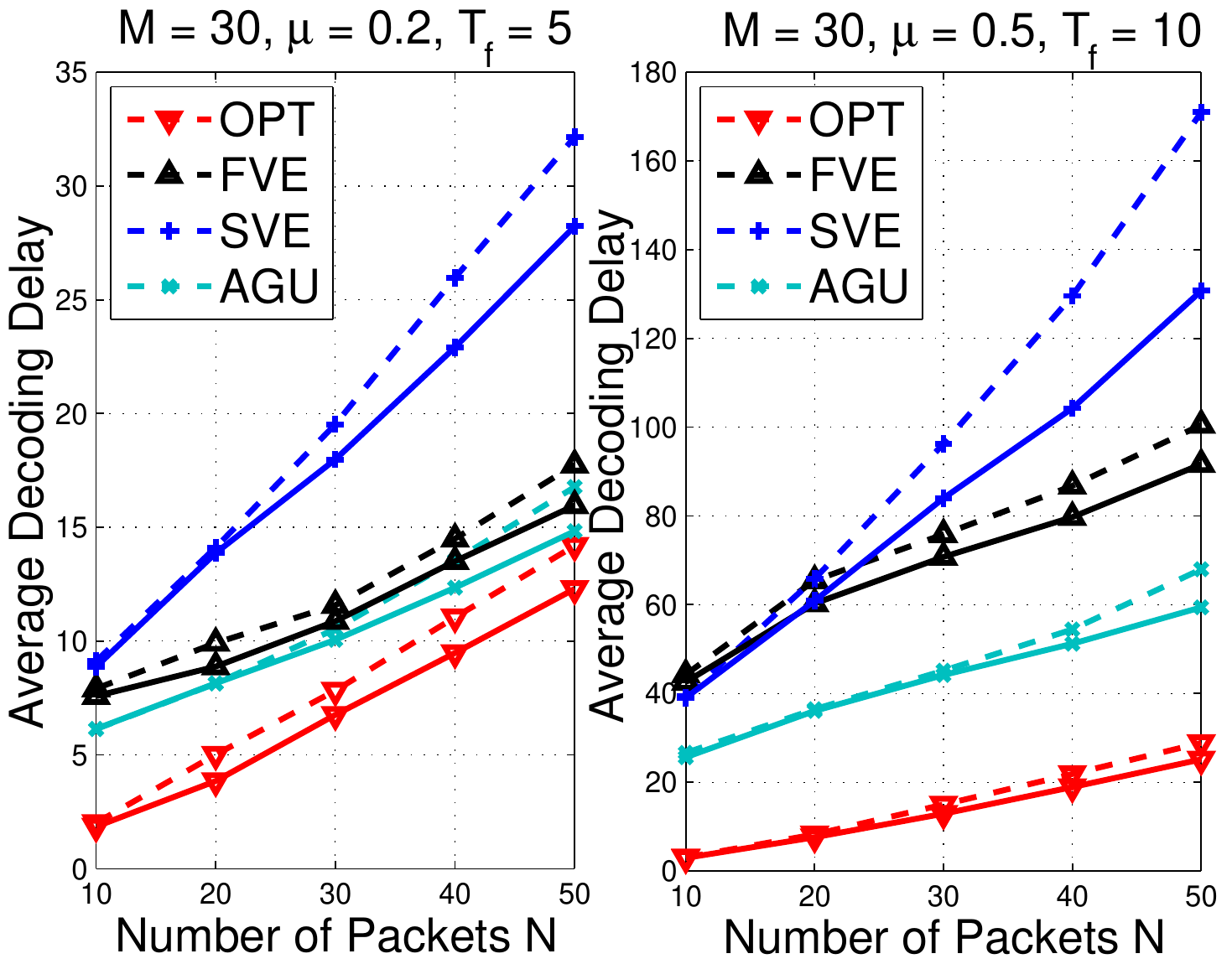}\\
  \caption{Mean decoding delays for intermittent lossy feedback versus $N$.}\label{fig:LOSSYN}
\end{figure}

\fref{fig:LOSSYM} depicts the comparison of mean decoding delays achieved by the different algorithms against $M$, for $N = 30$, $L=0.8$ and, respectively, for $\mu=0.2,T_f=5$ and $\mu=0.5,T_f=10$. \fref{fig:LOSSYN} depict the same comparison against $N$, for $M=30$, $L=0.8$ and, respectively, for $\mu=0.2,T_f=5$ and $\mu=0.5,T_f=10$. Where $L$ is the percentage of needed packets and $T_f$ the frame length. 

\begin{figure}[ht]
\centering
  % Requires \usepackage{graphicx}
  \includegraphics[width=1\linewidth]{./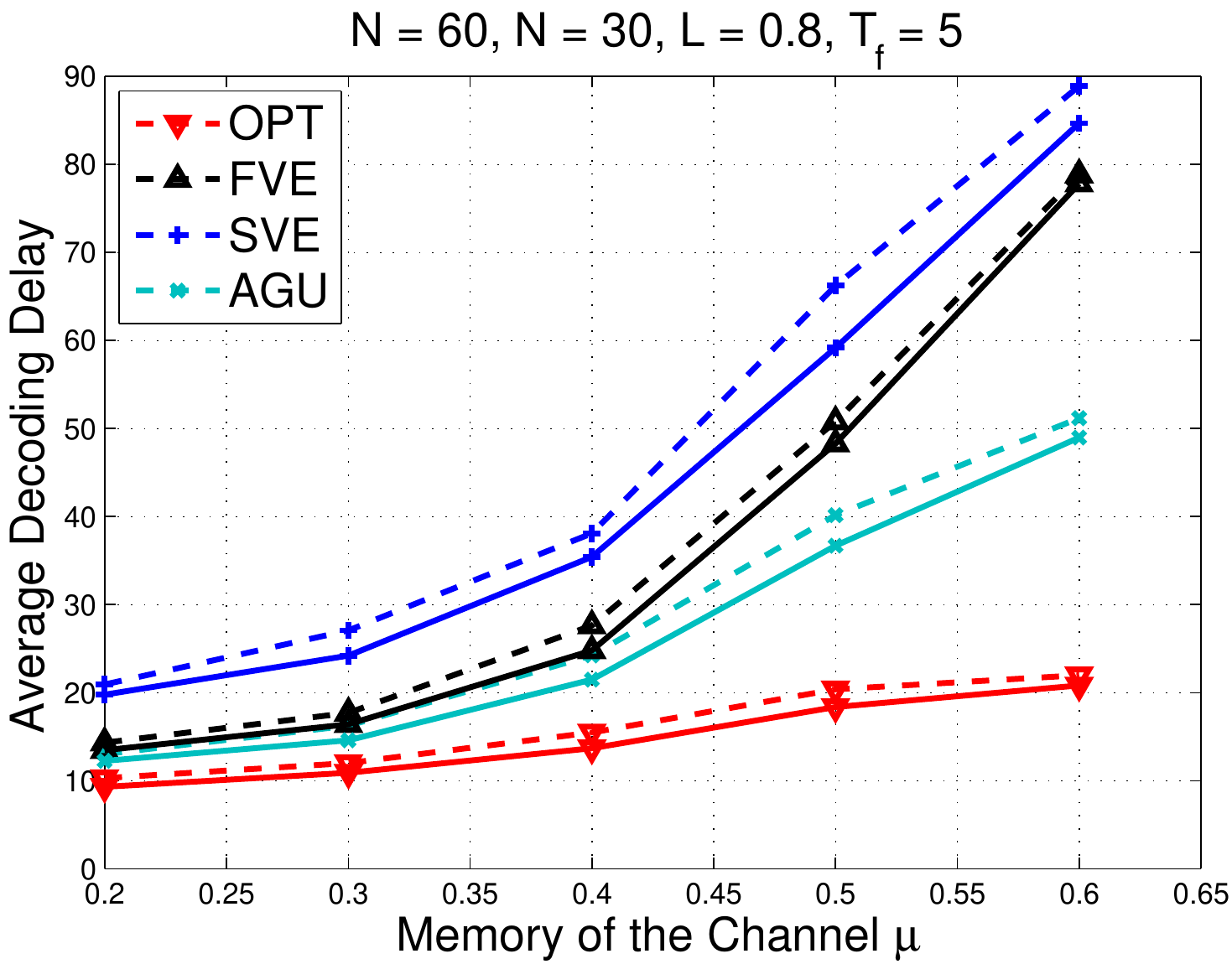}\\
  \caption{Mean decoding delays for intermittent lossy feedback versus $\mu$.}\label{fig:LOSSYMU}
\end{figure}

\begin{figure}[ht]
\centering
  % Requires \usepackage{graphicx}
  \includegraphics[width=1\linewidth]{./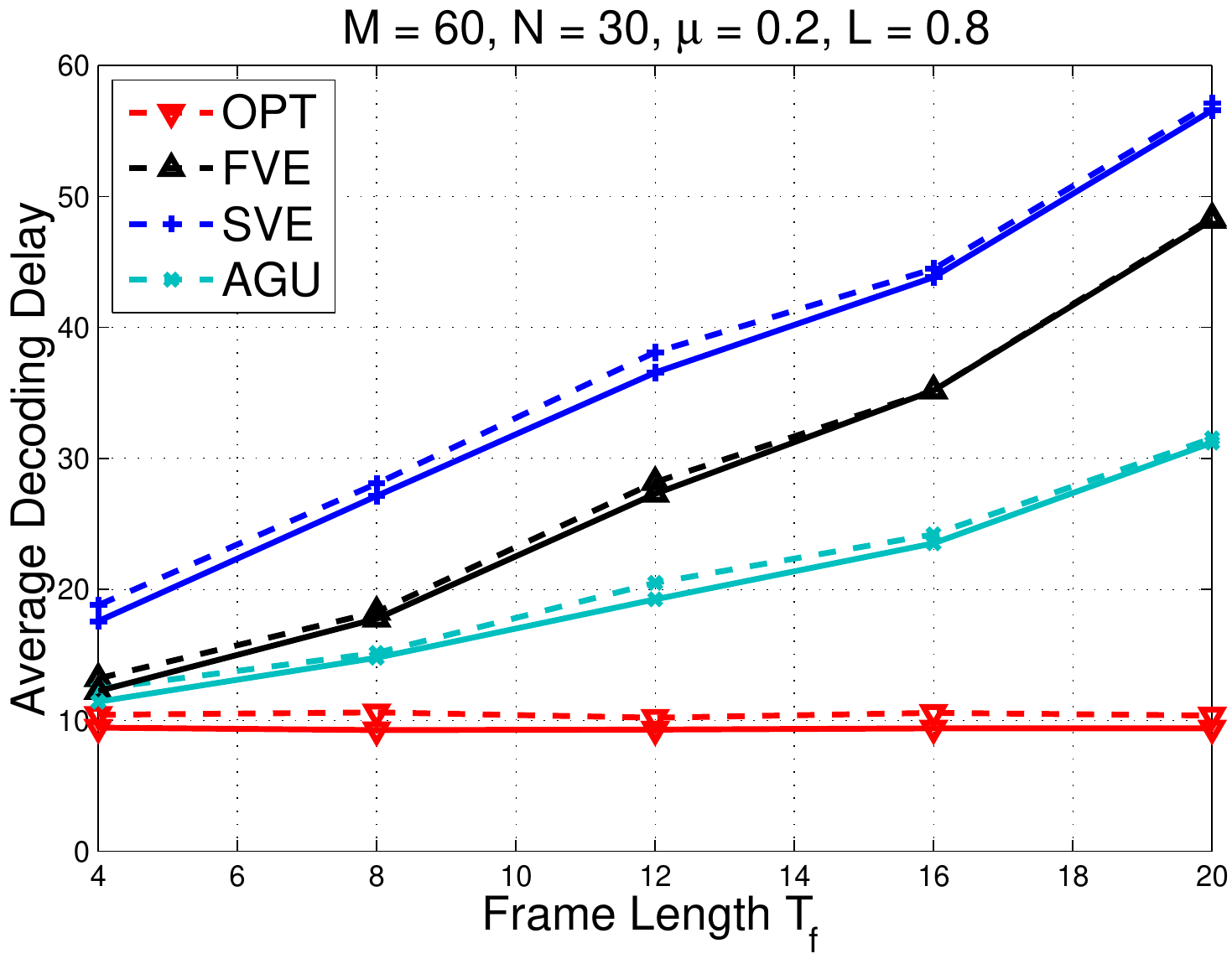}\\
  \caption{Mean decoding delays for intermittent lossy feedback versus $T_f$.}\label{fig:LOSSYTF}
\end{figure}

\begin{figure}[ht]
\centering
  % Requires \usepackage{graphicx}
  \includegraphics[width=1\linewidth]{./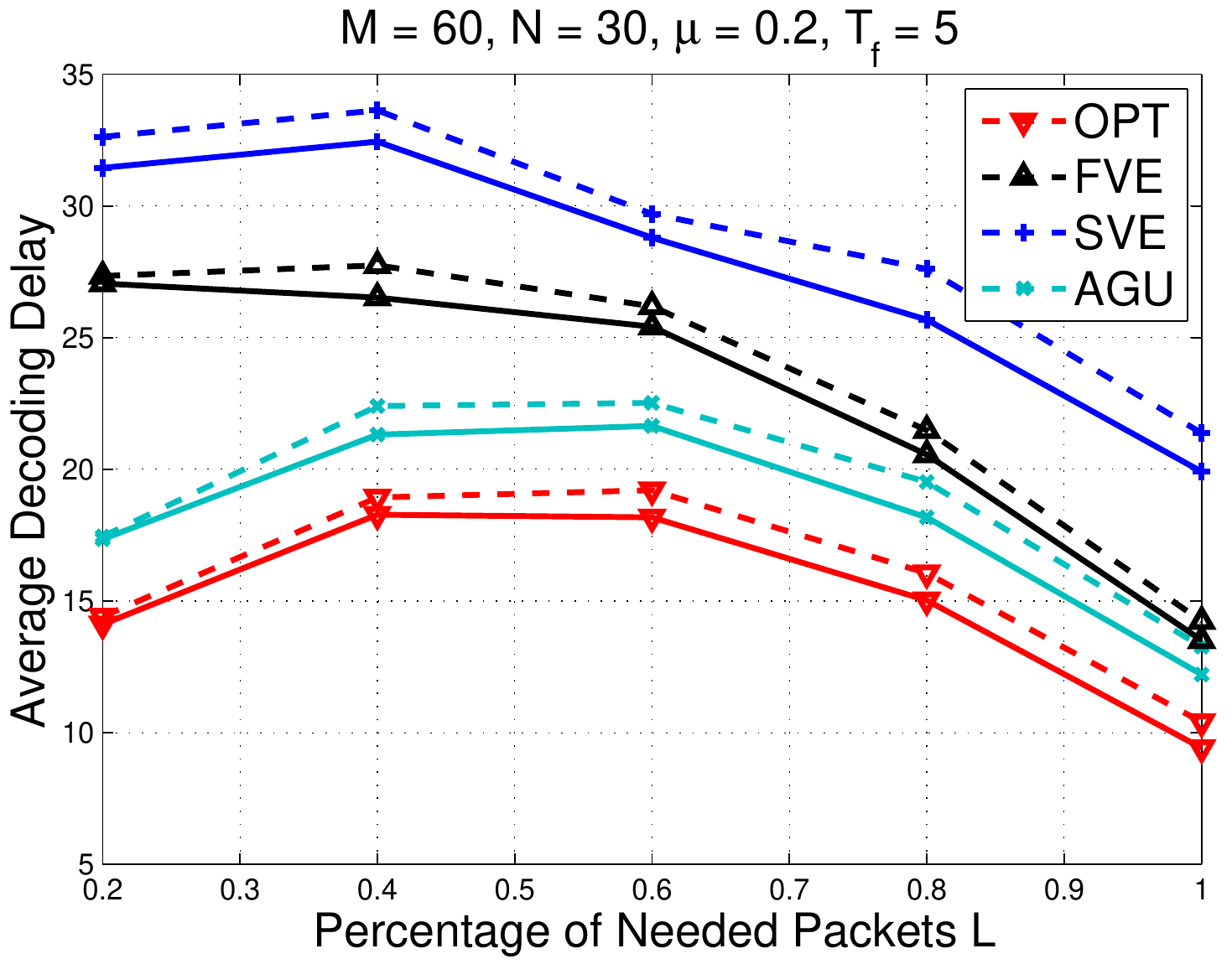}\\
  \caption{Mean decoding delays for intermittent lossy feedback versus $L$.}\label{fig:LOSSYL}
\end{figure}

\fref{fig:LOSSYMU} and \fref{fig:LOSSYTF} illustrate the comparison against $\mu$ ($T_f$) for $M=60$, $N=30$, $L=0.8$ and $T_f=5$ ($\mu=0.2$). The mean decoding delays achieved by the different algorithms against $L$ is shown in \fref{fig:LOSSYL}, for $M=60$, $N=30$, $\mu=0.2$ and $T_f=5$.

From all the figures, we can see that our proposed greedy algorithm achieve a better decoding delay in all the situation. The average decoding delay gain from \fref{fig:LOSSYM} \fref{fig:LOSSYN} and \fref{fig:LOSSYMU} is of $9\%$ comparing to the heuristic proposed in \cite{refahmed}. \fref{fig:LOSSYTF} shows that our algorithm perform a gain of $4\%$ when the percentage needed packets is less than $0.6$ and a gain higher than $9\%$ when this percentage is greater than $0.8$.

\fref{fig:LOSSYM} and \fref{fig:LOSSYN} show that all the algorithm (even the perfect feedback) achieve a close decoding delay for a low persistent channel. This can be explained by the fact that in low persistent channel, the channel change in an uncorrelated way which make channel estimation non effective for all algorithms. When the persistence of the channel becomes higher than it is more likely to remain in a state than to toggle between states, which make the estimation more accurate and explain the difference between algorithms in the decoding delay achieved.

From \fref{fig:LOSSYMU}, we clearly can see the gap between the algorithm when the persistence of the channel is higher than $0.4$ while the probability to be in the bad state still the same. We also see that our adaptive algorithm achieve reasonable degradation in high persistent channel. The same thinking is applicable for \fref{fig:LOSSYTF} our algorithm achieves a reasonable degradation in persistent channel ($\mu = 0:5$) and a large feedback period ($T_f = 10$) compared with the optimal solution (i.e perfect feedback) whereas other solutions quickly degrade.

SVE achieves the worst performance in all scenarios. This can be explained by the characteristics of this approach. In this approach, the persistent nature of the channel is not taken into account and the algorithm tries to estimate the state of uncertain packets using only the steady state probabilities. When a packet, concerning receiver $i$, is in an uncertain state, it is kept in the graph with the probability $\mathcal{P}_{B_i^p}$ and removed with the probability $\mathcal{P}_{G_i^p}$. When the channel is poorly persistent (i.e. $\mu \rightarrow 0$), then $p_i(t) \approx \mathcal{P}_{B_i^p}$ and the degradation is acceptable. However, when the memory of the channel increase, this approach completely diverge.

The same thinking is applicable for the FVE approach. In fact this approach consider all the packet received and remove all the uncertain packet from the graph until a feedback indicates the opposite. When the channel memory is low the probability $p_i(t) \approx b_i^p \ll 1$ is also low but when the persistent becomes higher the probability to lose the transmission $p_i(t) > \sum_{l=0}^{t-t_i^{(0)}-1}  \mu^lb_i^p \gg b_i^p$ becomes also higher. Thus removing all the uncertain vertexes is no longer an acceptable approach.

We clearly see that, when the percentage of packet is medium the decoding delay is high. This can be explained by the nature of the multicast scenario. In multicast, the receivers are listening to a lot of unwanted packet before completion which is reflected by a high decoding delay. When this percentage is high, the receivers are listening to less unwanted packets and thus the decoding delay is low. When the percentage is very low ($0.2$), the receivers are demanding only few packets which explain the low decoding delay.

\section{Conclusion}\label{sec:conclusion}

In this paper, the performance of generalized instantly decodable network coding in persistent erasure channels over lossy intermittent feedback to minimize the decoding delay are studied. The events than can occur at the receiver are first identified and their probabilities computed. Given these probabilities, we formulate the problem of minimizing the deciding delay and model it by a problem of maximum weight clique in the G-IDNC graph. In order to solve the former problem in linear time with the size of the graph, we design a greedy algorithm and we compared it with the heuristic proposed in \cite{refahmed}. Though extensive simulations, we show that our adaptive algorithm achieves the best decoding delay against the blind approaches proposed in \cite{ref12,ref13} for all the situations and more significant in high persistent channels. We also show that our heuristic is better adapted to reduce the decoding delay than the one proposed in \cite{refahmed} in these circumstances.

\appendices
\numberwithin{equation}{section}
\section{Auxiliary Theorem}\label{ap1}
To estimate the channel and feedback erasure probability, we first introduce the following theorem:
\begin{theorem}
Let $(X_n)_{n\geq1}$ be a two state ($x$ and $y$) Markov chain, with $P_{tr_{ x\rightarrow y}}$ and $P_{tr_{y\rightarrow x}}$ the transition probability from state $x$ to $y$ and $y$ to $x$, respectively. Let $\mu = (1-P_{tr_{ x\rightarrow y}}-P_{tr_{y\rightarrow x}})$ be the memory of the chain. Define $f(n) = \mathds{P}(X_n = y | X_{n^0} = x),~ \forall~ n \geq n^0$. We have:
\begin{align}
f(n) = P_{tr_{ x\rightarrow y}} \times \overline{\sum}_{i=0}^{n-n^0-1} \mu^{i}   
\end{align}
where 
\begin{align}
\overline{\sum}_{x \in X} (.)   = 
\begin{cases}
\sum_{x \in X} (.) &\text{if } X \neq \varnothing \\
0  &\text{if } X = \varnothing.
\end{cases} 
\end{align}
\end{theorem}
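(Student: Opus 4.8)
The plan is to reduce the statement to a scalar affine recurrence in the number of steps $m = n - n^0$ and then solve it in closed form. To set up the recurrence, I would condition on the state $X_{n-1}$ at the step immediately preceding $n$ and invoke the Markov property together with the law of total probability:
\begin{align}
f(n) &= \mathds{P}(X_{n-1}=x \mid X_{n^0}=x)\,P_{tr_{ x\rightarrow y}} \nonumber \\
&\quad + \mathds{P}(X_{n-1}=y\mid X_{n^0}=x)\,(1 - P_{tr_{y\rightarrow x}}). \nonumber
\end{align}
Because the chain has only the two states $x$ and $y$, the two conditional probabilities on the right are $\mathds{P}(X_{n-1}=x\mid X_{n^0}=x) = 1 - f(n-1)$ and $\mathds{P}(X_{n-1}=y\mid X_{n^0}=x) = f(n-1)$, so everything is expressed in terms of $f(n-1)$.

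Next I would substitute these and collect the terms multiplying $f(n-1)$. The key cancellation is that the coefficient of $f(n-1)$ becomes $-P_{tr_{ x\rightarrow y}} + (1 - P_{tr_{y\rightarrow x}}) = 1 - P_{tr_{ x\rightarrow y}} - P_{tr_{y\rightarrow x}} = \mu$, which is exactly the memory of the chain. This yields the affine recurrence
\begin{align}
f(n) = P_{tr_{ x\rightarrow y}} + \mu\, f(n-1), \nonumber
\end{align}
valid for every $n > n^0$, with base value $f(n^0) = 0$ since the conditioning event $X_{n^0}=x$ is incompatible with $X_{n^0}=y$.

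Finally I would solve the recurrence, either by unrolling it directly into a finite geometric progression or by a one-line induction on $m = n - n^0$. Unrolling gives $f(n) = P_{tr_{ x\rightarrow y}}\,(1 + \mu + \mu^2 + \cdots + \mu^{m-1})$, which is precisely $P_{tr_{ x\rightarrow y}}\,\overline{\sum}_{i=0}^{n-n^0-1}\mu^{i}$ as claimed. I expect no genuine obstacle here; the calculation is routine once the recurrence is in hand. The only points requiring care are the sign bookkeeping that produces the coefficient $\mu$, and verifying that the barred-sum convention (which returns $0$ on an empty index set) correctly matches the base case $f(n^0)=0$, i.e. the $m=0$ instance where the summation range $0 \le i \le -1$ is empty.
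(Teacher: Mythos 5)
Your proposal is correct and follows essentially the same route as the paper's own proof: both condition on the state $X_{n-1}$ via the law of total probability and the Markov property, obtain the affine recurrence $f(n) = \mu\, f(n-1) + P_{tr_{x\rightarrow y}}$ with base case $f(n^0)=0$, and unroll it into the finite geometric sum, with the empty-sum convention handling $n=n^0$. No gaps to report.
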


\begin{proof}
If $n=n^0$, then it is clear that:
\begin{align}
f(n) = \mathds{P}(X_{n^0} = y | X_{n^0} = x) = 0 
\end{align}
If $n>n^0$, then the relationship between $f(n)$ and $f(n-1)$ can be expressed as:
\begin{align}
&f(n) = \mathds{P}(X_n = y | X_{n^0} = x) \nonumber\\
&= \mathds{P}(X_n = y | X_{n-1} = x,X_{n^0} = x)\mathds{P}(X_{n-1} = x | X_{n^0} = x) \nonumber\\
& + \mathds{P}(X_n = y | X_{n-1} = y,X_{n^0} = x)\mathds{P}(X_{n-1} = y | X_{n^0} = x) \nonumber\\
&= \mathds{P}(X_n = y | X_{n-1} = x)\mathds{P}(X_{n-1} = x | X_{n^0} = x) \nonumber\\
& + \mathds{P}(X_n = y | X_{n-1} = y)\mathds{P}(X_{n-1} = y | X_{n^0} = x) \nonumber\\
&= (1-P_{tr_{y\rightarrow x}})f(n-1) + P_{tr_{ x\rightarrow y}}(1-f(n-1)) \nonumber\\
&= \mu \times f(n-1) + P_{tr_{ x\rightarrow y}}.
\end{align}
By a simple computation of the previous sequence, the expression of $f(n)$ becomes:
\begin{align}
f(n) &=
\begin{cases}
\mu^{n-n^0}f(n^0) + P_{tr_{ x\rightarrow y}} \times \sum_{i=0}^{n-n^0-1} \mu^{i} \hspace{0.1cm} &\text{if }  n>n^0\\
f(n^0) &\text{if }  n=n^0
\end{cases} \nonumber\\
&= \mu^{n-n^0}\times f(n^0) + P_{tr_{ x\rightarrow y}} \times \overline{\sum}_{i=0}^{n-n^0-1} \mu^{i}   \nonumber\\
&= P_{tr_{ x\rightarrow y}} \times \overline{\sum}_{i=0}^{n-n^0-1} \mu^{i}   
\end{align}
\end{proof}

\section{Proof of \thref{th1}}\label{ap2}

Note that the Wants sets do not change from the time a feedback is heard to the next time a feedback is heard from this receiver. In other words:
\begin{align}
\bigcup_{t=n_i^{(-1)}\times T_f+1}^{n_i^{(0)}\times T_f} \mathcal{W}_i(t) = \mathcal{W}_i(n_i^{(0)}\times T_f).
\end{align}
From our assumption, in \sref{sec:chmodel}, that for each targeted receiver, from the last time a feedback is heard from that receiver, there is a packet which is attempted only once, we have:
\begin{align}
~ \forall~ &i \in \mathcal{N},\exists~ j \in \mathcal{W}_i(n_i^{(0)}\times T_f) \text{ such that } \nonumber \\
& \qquad  | \bigcup_{k=n_i^{(-1)}+1}^{n_i^{(0)}} \lambda_{ij}(k)| = 1.
\end{align}
Let $\mathds{J}_i$ be the sets of packets, attempted only once, for receiver $i,~ \forall~ i \in \mathcal{M}$ from the frame $(n_i^{(-1)}+1)$ to the frame $n_i^{(0)}$. This can be expressed as follows:
\begin{align}
j \in \mathds{J}_i \Leftrightarrow    | \bigcup_{k=n_i^{(-1)}+1}^{n_i^{(0)}} \lambda_{ij}(k)| = 1.
\end{align}
It is clear that:
\begin{align}
j_i^{(0)} &= \underset{j \in \mathcal{W}_i(n_i^{(0)} \times T_f)}{\text{argmax}}  \bigcup_{k=n_i^{(-1)}+1}^{n_i^{(0)}} \lambda_{ij}(k) \nonumber \\
& \qquad \text{subject to } | \bigcup_{k=n_i^{(-1)}+1}^{n_i^{(0)}} \lambda_{ij}(k) | = 1   \nonumber \\
&= \underset{j \in \mathds{J}_i}{\text{argmax}}  \bigcup_{k=n_i^{(-1)}+1}^{n_i^{(0)}} \lambda_{ij}(k).
\end{align}
Since according to the system constraint we have $\mathds{J}_i \neq \varnothing,~ \forall~ i \in \mathcal{M}$, then the existence of $j_i^{(0)}$ is guarantee.

The state of the channel at time $t_i^{(0)}$ can be determined using the feedback received at time $t_i^*$ and depending of the reception statue of packet $j_i^{(0)}$. Then the receiver channel was good or bad according to these scenarios:
\begin{align}
C_i^p(t_i^{(0)}) = 
\begin{cases}
G \hspace{0.7 cm} \text{if } f_{i j_i^{(0)}}(t_i^*) = 0\\
B \hspace{0.7 cm} \text{if } f_{i j_i^{(0)}}(t_i^*) = 1.
\end{cases} 
\end{align}
The probability to lose the transmission or the feedback at time $t$ can then be expressed as:
\begin{align}
p_i(t) &= \mathds{P}(C_i^p(t) = B) \nonumber\\
&= 
\begin{cases}
\mathds{P}(C_i^p(t)=B|C_i^p(t_i^{(0)})= G) \hspace{0.3 cm}\text{if } f_{i j_i^{(0)}}(t_i^*) = 0\\
\mathds{P}(C_i^p(t)=B|C_i^p(t_i^{(0)})= B) \hspace{0.3 cm}\text{if } f_{i j_i^{(0)}}(t_i^*) = 1.
\end{cases}
\end{align}
According to the analysis done in Appendix A, the probability to lose the transmission at time $t$ can be expressed as:
\begin{align}
p_i(t) &= 
\begin{cases}
\overline{\sum}_{l=0}^{t-t_i^{(0)}-1}   \mu_i^lb_i^p \hspace{1.5 cm}&\text{if } f_{i j_i^{(0)}}(t_i^*) = 0\\
1 - \overline{\sum}_{l=0}^{t-t_i^{(0)}-1}   \mu_i^lg_i^p    \hspace{0.3 cm}&\text{if } f_{i j_i^{(0)}}(t_i^*) = 1.
\end{cases} 
\end{align}
Since we are computing this probability for $t\geq t_i^*$ and that $t_i^* > t_i^{(0)}$, the expression can be simplified as follows:
\begin{align}
p_i(t) &= 
\begin{cases}
 \sum_{l=0}^{t-t_i^{(0)}-1}  \mu_i^lb_i^p \hspace{1.5 cm}&\text{if } f_{i j_i^{(0)}}(t_i^*) = 0\\
1 -\sum_{l=0}^{t-t_i^{(0)}-1}  \mu_i^lg_i^p \hspace{0.3 cm}&\text{if } f_{i j_i^{(0)}}(t_i^*) = 1.
\end{cases} 
\end{align}
The state of the feedback channel at time $(t_i^*-1)$ was good, since a feedback is heard. In other words:
\begin{align}
q_i(t)  &= \mathds{P}(C_i^q(t) = B) \nonumber\\
&= \mathds{P}(C_i^q(t) = B | C_i^q(t_i^*-T_{u_i}) = G).
\end{align}
According to the analysis done in Appendix A, the probability to lose the feedback at time $t$ can be expressed as:
\begin{align}
q_i(t) &= \overline{\sum}_{l=0}^{t-t_i^*+T_{u_i}-1}   \psi_i^lb_i^q.
\end{align}
Since we are computing this probability for $t\geq t_i^*> t_i^*-1$, the expression can be simplified as follows:
\begin{align}
q_i(t) &= \sum_{l=0}^{t-t_i^*+T_{u_i}-1} \psi_i^lb_i^q.
\end{align}

\section{Proof of \thref{th2}}\label{ap3}

If receiver $i$ is not targeted in the transmission $\mathfrak{J}$ of if the intended packet for that receiver is a secondary packet then two scenarios can occur:
\begin{itemize}
\item If the Wants set of that receiver is non-uncertain or partially uncertain then it will experience one unit of delay upon successful reception. Thus, the probability of a delay increase is: 
\begin{align}
\mathds{P}( d_i(\mathfrak{J},t)=1) = 1 - p_i(t).
\end{align}
\item If the Wants set of that receiver is fully uncertain then it will experience a delay if the two following conditions are true:
\begin{enumerate}
\item $i$ received the packet
\item $i$ did not obtained all the requested packets
\end{enumerate}
Since the events are independent then the probability of a delay increase is: 
\begin{align}
\mathds{P}( d_i(\mathfrak{J},t)=1) = (1 - p_i(t))(1 - p_{i,f}(t)).
\end{align}
\end{itemize}
If receiver $i$ is targeted in the transmission $\mathfrak{J}$ by a primary packet, then three scenarios can occur:
\begin{itemize}
\item If the state of the intended packet in the SFM is known (i.e. $f_{i\mathfrak{J}_i}=1$) then this receiver will not experience a decoding delay whether the coded packet is received or not at this receiver.
\item If the state of the intended packet in the SFM is unknown (i.e. $f_{i\mathfrak{J}_i}=x$) then the expected delay increment will depend on the state of the Wants set of that receiver:
\begin{itemize}
\item[a) ] If the Wants set of that receiver is non-uncertain or partially uncertain then receiver $i$ will experience a delay if the following two events occur:
\begin{enumerate}
\item $i$ receive the packet
\item $\mathfrak{J}_i$ is not innovative for $i$
\end{enumerate}
It is clear that these events are independent then the probability of the expected delay is:
\begin{align}
\mathds{P}( d_i(\mathfrak{J},t)=1) = (1 - p_i(t))(1 - p_{i,n}(\mathfrak{J}_i,t)).
\end{align}
\item[b) ] If the Wants set of that receiver is fully uncertain then receiver $i$ will experience a delay if the following three events occur:
\begin{enumerate}
\item $i$ receive the packet
\item $\mathfrak{J}_i$ is not innovative for $i$
\item $i$ did not obtained all the requested packets
\end{enumerate}
The first event is independent of the two others. The probability of delay increment can be expressed as:
\begin{align}
\mathds{P}( d_i(\mathfrak{J},t)=1) = (1 - p_i(t)) \mathds{P}(2,3).
\end{align}
where $ \mathds{P}(2,3)$ is the probability for events 2 and 3 to happen.
Let $\mathfrak{I}_i^j$ be the event of an innovative packet $j$ for receiver $i$ and $\mathfrak{F}_i$ be the event of receiver $i$ getting all its requested packets. According to these definitions we have:
\begin{align}
\mathds{P}(2,3) &= \mathds{P}( d_i(\mathfrak{J},t)=1 | \mathfrak{I}_i^{\mathfrak{J}_i})\mathds{P}( \mathfrak{I}_i^{\mathfrak{J}_i}) \nonumber \\
& \qquad + \mathds{P}( d_i(\mathfrak{J},t)=1 | \overline{ \mathfrak{I}_i^{\mathfrak{J}_i}})\mathds{P}( \overline{\mathfrak{I}_i^{\mathfrak{J}_i}}).
\end{align}
Since it is impossible for receiver $i$ to admit a delay if the packet is innovative then:
\begin{align}
\mathds{P}(2,3) &= \mathds{P}( d_i(\mathfrak{J},t)=1 | \overline{ \mathfrak{I}_i^{\mathfrak{J}_i}}) (1 - p_{i,n}(\mathfrak{J}_i,t)).
\end{align}
We can develop the remaining term as follows:
\begin{align}
&\mathds{P}( d_i(\mathfrak{J},t)=1 | \overline{ \mathfrak{I}_i^{\mathfrak{J}_i}}) = \mathds{P}( d_i(\mathfrak{J},t) \nonumber \\
& \qquad =1 | \overline{ \mathfrak{I}_i^{\mathfrak{J}_i}},\mathfrak{F}_i)  \mathds{P}(\mathfrak{F}_i| \overline{ \mathfrak{I}_i^{\mathfrak{J}_i}}) \nonumber \\
& + \mathds{P}( d_i(\mathfrak{J},t)=1 | \overline{ \mathfrak{I}_i^{\mathfrak{J}_i}},\overline{\mathfrak{F}_i}) \mathds{P}(\overline{\mathfrak{F}_i}| \overline{ \mathfrak{I}_i^{\mathfrak{J}_i}}).
\end{align}
Since it is impossible for receiver $i$ to admit a delay if he received all the requested packets and according to the analysis done above, we have :
\begin{align}
\mathds{P}( d_i(\mathfrak{J},t)=1 | \overline{ \mathfrak{I}_i^{\mathfrak{J}_i}},\mathfrak{F}_i) =0 \\
\mathds{P}( d_i(\mathfrak{J},t)=1 | \overline{ \mathfrak{I}_i^{\mathfrak{J}_i}},\overline{\mathfrak{F}_i}) =1.
\end{align}
Then we have:
\begin{align}
\mathds{P}(2,3) &= (1 - p_{i,n}(\mathfrak{J}_i,t))\mathds{P}(\overline{\mathfrak{F}_i}| \overline{ \mathfrak{I}_i^{\mathfrak{J}_i}}) \nonumber \\
&= (1 - p_{i,n}(\mathfrak{J}_i,t)) \nonumber \\
&\qquad \times (1 - \prod_{k \in (\mathcal{W}_i \setminus \mathfrak{J}_i ) } (1-p_{i,n}(k,t)) )  \nonumber \\
&= 1-p_{i,n}(\mathfrak{J}_i,t)-p_{i,f}(t).
\end{align}
\end{itemize}
\end{itemize}
According to these scenarios, the probability that receiver $i$ with non-empty Wants set experience a decoding delay, at time $t$, after the transmission  $\mathfrak{J}$ is:
\begin{align}
&\mathds{P}(d_i(\mathfrak{J},t) = 1) \nonumber \\
&= 
\begin{cases}
1-p_i(t) \hspace{0.01 cm}& i \in (\widehat{\tau} \cap \overline{F}) \\
(1-p_i(t))(1-p_{i,f}(t)) \hspace{0.01 cm}&i \in (\widehat{\tau} \cap F) \\
0 \hspace{0.01 cm}& i \in (\tau \cap \overline{U}) \\
(1-p_i(t))(1-p_{i,n}(\mathfrak{J}_i,t)) \hspace{0.01 cm}&  i \in (\tau \cap (U \setminus F)) \\
(1-p_i(t)) \\
\qquad \times (1-p_{i,n}(\mathfrak{J}_i,t)-p_{i,f}(t)) \hspace{0.01 cm}&i \in (\tau \cap F) .
\end{cases}
\end{align}

\section{Proof of \thref{th3}}\label{ap4}

Let $\mathcal{I}_i^b(j,n^-(t)\times T_f +1)=\mathcal{I}_i^b(j,t^b)$ be the event of an innovative packet $j$ for receiver $i$ in the beginning of the frame at time $t$ (before sending any packet during that frame) and let $\mathcal{I}_i(j,t)$ be the event of that packet is innovative for that receiver at time $t$ (i.e $\mathds{P}(\mathcal{I}_i(j,t))=p_{i,n}(j,t)$). The probability that packet $j$ is innovative for receiver $i$ at time$t$ can be expressed as:
\begin{align}
p_{i,n}(j,t) &= \mathds{P}(\mathcal{I}_i(j,t)|\mathcal{I}_i^b(j,t^b))\mathds{P}(\mathcal{I}_i^b(j,t^b)) \nonumber \\
& \qquad + \mathds{P}(\mathcal{I}_i(j,t)|\overline{\mathcal{I}_i^b(j,t^b)})\mathds{P}(\overline{\mathcal{I}_i^b(j,t^b)}).
\end{align}
It is clear that if a packet is not innovative at the beginning of a frame, it cannot be innovative during that frame. Thus:
\begin{align}
\mathds{P}(\mathcal{I}_i(j,t)|\overline{\mathcal{I}_i^b(j,t^b)}) = 0.
\end{align}
Define $L_{i}^b(j,n)$ and $R_{i}^b(j,n)$ such that:
\begin{align}
L_{i}^b(j,t) &= \mathds{P}( f_{ij}(t) = 1 |  f_{ij}(t^b) = x ,\mathcal{I}_i^b(j,t^b) ) \\
R_{i}^b(j,t) &= \mathds{P}( f_{ij}(t) = 0 |  f_{ij}(t^b) = x,\mathcal{I}_i^b(j,t^b)  ) .
\end{align}
The probability for packet $j$ to be innovative to receiver $i$, at time $t$, giving that it is innovative in the beginning of the frame is:
\begin{itemize}
\item If packet $j$ was not attempted to receiver $i$ during that frame: $\mathds{P}(\mathcal{I}_i(j,t)|\mathcal{I}_i^b(j,t^b)) =  1$
\item If packet $j$ was attempted to receiver $i$ during that frame: $\mathds{P}(\mathcal{I}_i(j,t)|\mathcal{I}_i^b(j,t^b)) =  L_{i}^b(j,t)$
\end{itemize}
During the downlink frame, no feedback is expected to be received. Thus if packet $j$ is sent to receiver $i$ then:
\begin{itemize}
\item The packet is lost with probability 
\begin{align}
L_{i}^b(j,t) =  \prod\limits_{k \in \lambda_{ij}(n^+(t))} p_i(k) . 
\end{align}
\item The packet is received with probability 
\begin{align}
R_{i}^b(j,t) = (1 - \prod\limits_{k \in \lambda_{ij}(n^+(t))} p_i(k)  ).
\end{align}
\end{itemize}
Note that if packet $j$ was not attempted to receiver $i$ during the frame $n^+(t)$ then $\lambda_{ij}(n)=\varnothing$. Then the expression becomes:
\begin{align}
\mathds{P}(\mathcal{I}_i(j,t)|\mathcal{I}_i^b(j,t^b)) = \overline{\prod_{k \in \lambda_{ij}(n^+(t))}}   p_i(k) .
\end{align}
If the state of packet $j$ for receiver $i$ is known then the probability than it is innovative is $1$. If a feedback concerning packet $j$ is received from receiver $i$ at time $(n^-(t)*T_f)$ then it is clear that $\mathds{P}(\mathcal{I}_i^b(j,t^b)) = 1$.
If receiver $i$ was not targeted in the frame $n^-(t)$, then its state remains the same during all the frame. Thus 
\begin{align}
\mathds{P}(\mathcal{I}_i^b(j,t^b)) = \mathds{P}(\mathcal{I}_i^b(j,k) , ((n^-(t)-1)*T_f+1) \leq k \leq t^b.
\end{align}
Define $\mathcal{U}_i(n^+(t))$ as the unheard feedback from receiver $i$ during the frame at time $t$ and define $L_{i}(j,t)$ and $R_{i}(j,t)$ such that:
\begin{align}
L_{i}(j,t) &= \mathds{P}( f_{ij}(t) = 1 | \mathcal{U}_i(n^-(t)) ) \\
R_{i}(j,t) &= \mathds{P}( f_{ij}(t) = 0 | \mathcal{U}_i(n^-(t)) ). 
\end{align}
If receiver $i$ was targeted by packet $j$ in the frame $n^-(t)$ and no feedback is heard during uplink frame then $\mathds{P}(\mathcal{I}_i^b(j,t)) = L_{i}(j,t)$. An unheard feedback $\mathcal{U}_i(n^+(t))$ can occur if one of these scenarios occurs:
\begin{itemize}
\item All the packet sent during downlink frame number $n^+(t)$ are lost. This happen with probability 
\begin{align}
\prod\limits_{s \in \mathcal{X}_i^d(n^+(t))} p_i(s).
\end{align}
\item At least one packet arrived but the feedback is lost. This event happen with probability 
\begin{align}
(1-\prod\limits_{s \in \mathcal{X}_i^d(n^+(t))} p_i(s))q_i(u_i(n^+(t))).
\end{align}
\end{itemize}
Then the probability to event $\mathcal{U}_i(n^+(t))$ to occur is:
\begin{align}
\mathds{P}(\mathcal{U}_i(n^+(t))) & = \prod\limits_{s \in \mathcal{X}_i^d(n^+(t))} p_i(s) \nonumber\\
& \qquad + (1-\prod\limits_{s \in \mathcal{X}_i^d(n^+(t))} p_i(s))q_i(u_i(n^+(t)))).
\end{align}
When the receiver $i$ is targeted, in frame $n^-(t)$, and the feedback is not received, packet $j$ can be lost or received according to theses scenarios:
\begin{itemize}
\item The packet is lost if one of these events occurs:
\begin{itemize}
\item All the sent packets during downlink frame number $n^-(t)$ are lost. This event happen with probability $ \prod\limits_{s \in \mathcal{X}_i^d(n^-(t))} p_i(s)$.
\item The considered packet is lost, at least one of the other packet arrived and the feedback is lost. This event happen with probability $\prod\limits_{s \in \lambda_{ij}(n^-(t))} p_i(s)  
 \times (1- \prod\limits_{s \in \mathcal{X}_i^d(n^-(t)) \setminus \lambda_{ij}(n^-(t))} p_i(s) )q_i(u_i(n^-(t))) $.
\end{itemize}
\item The packet is received if the packet arrived and the feedback is lost. This event happen with probability $(1- \prod\limits_{s \in \lambda_{ij}(n^-(t))} p_i(s) )q_i(u_i(n^-(t)))$. 
\end{itemize}
Considering theses events, the expressions of $L_{i}(j,t)$ and $R_{i}(j,t)$ becomes:
\begin{align}
& L_{i}(j,t) = \left(\prod\limits_{s \in \mathcal{X}_i^d(n^-(t))} p_i(s) + \prod\limits_{s \in \lambda_{ij}(n^-(t))} p_i(s)  \right.\nonumber\\
& \left. {} \times (1- \prod\limits_{s \in \mathcal{X}_i^d(n^-(t)) \setminus \lambda_{ij}(n^-(t))} p_i(s) )q_i(u_i(n^-(t) )) \right) \nonumber \\
& \qquad \times \left(\prod\limits_{s \in \mathcal{X}_i^d(n^-(t))} p_i(s) \right.\nonumber\\
& \left. {} \qquad + (1-\prod\limits_{s \in \mathcal{X}_i^d(n^-(t))} p_i(s))q_i(u_i(n^-(t))))\right)^{-1} .
\end{align}
\begin{align}
&R_{i}(j,t) = \left( (1- \prod\limits_{s \in \lambda_{ij}(n^-(t))} p_i(s) )q_i(u_i(n^-(t) )) \right) \nonumber\\
& \qquad \times \left(\prod\limits_{s \in \mathcal{X}_i^d(n^-(t))} p_i(s) \right.\nonumber\\
& \left. {} \qquad + (1-\prod\limits_{s \in \mathcal{X}_i^d(n^-(t))} p_i(s))q_i(u_i(n^-(t))))\right)^{-1}.
\end{align}
Considering these expressions, the probability for packet $j$ to be innovative to receiver $i$ in the beginning of the frame $n^+(t)$ is:
\begin{itemize}
\item if a feedback is received in the frame $n^-(t)$: 
\begin{align}
\mathds{P}(\mathcal{I}_i^b(j,n)) = 1.
\end{align}
\item if no feedback is received in the frame $n^-(t)$ then:
\begin{itemize}
\item if receiver $i$ was not targeted in frame $n^-(t)$: $\mathds{P}(\mathcal{I}_i^b(j,t^b)) = \mathds{P}(\mathcal{I}_i^b(j,k)), ((n^-(t)-1)*T_f+1) \leq k \leq t^b$
\item if receiver $i$ was targeted in frame $n^-(t)$:
\end{itemize}
\end{itemize}
\begin{align}
\mathds{P}(\mathcal{I}_i^b(j,t^b)) &= L_{i}(j,t) \mathds{P}(\mathcal{I}_i^b(j,(n^-(t)-1)*T_f)).
\end{align}
Then, if the state of packet $j$ is uncertain for receiver $i$, the probability for that packet $j$ to be innovative in the beginning of the  frame at time $t^b$ is:
\begin{align}
\mathds{P}&(\mathcal{I}_i^b(j,t^b)) = \prod_{k \in \mathcal{K}_{ij}} \left\{ \left( \prod_{s \in \mathcal{X}_i^d(k)} p_i(s)  +  \prod_{s \in \lambda_{ij}(k)} p_i(s)  \right. \right. \nonumber\\
& \left. {} \left. {} \times (1- \prod_{s \in \mathcal{X}_i^d(k) \setminus \lambda_{ij}(k)} p_i(s) )q_i(u_i(k )) \right) \right. \nonumber\\
& \left. {} \times \left( \prod_{s \in \mathcal{X}_i^d(k)} p_i(s)  + (1- \prod_{s \in \mathcal{X}_i^d(k)} p_i(s) )q_i(u_i(k )) \right)^{-1} \right\} \nonumber\\
&  \qquad ~ \forall~ j \in  \mathcal{W}_i(t),~ \forall~ i \in \tau \cap U ,~ \forall~ t \in \mathds{N}^+.
\end{align}
Note that if the state of packet $j$ is certain for receiver $i$, then this packet has never been attempted since the last feedback (i.e. $\mathcal{K}_{ij} = \varnothing$). Then the probability for packet $j$ to be innovative to receiver $i$ in the beginning of the  frame at time $t^b$ can be expressed as:
\begin{align}
\mathds{P}&(\mathcal{I}_i^b(j,t^b)) = \overline{\prod_{k \in \mathcal{K}_{ij}}}   \left\{ \left( \prod_{s \in \mathcal{X}_i^d(k)} p_i(s)  +  \prod_{s \in \lambda_{ij}(k)} p_i(s)  \right. \right. \nonumber\\
& \left. {} \left. {} \times (1- \prod_{s \in \mathcal{X}_i^d(k) \setminus \lambda_{ij}(k)} p_i(s) )q_i(u_i(k)) \right) \right. \nonumber\\
& \left. {} \times \left( \prod_{s \in \mathcal{X}_i^d(k)} p_i(s)  + (1- \prod_{s \in \mathcal{X}_i^d(k)} p_i(s) )q_i(u_i(k )) \right)^{-1} \right\} \nonumber\\
&  \qquad ~ \forall~ j \in  \mathcal{W}_i(t),~ \forall~ i \in \tau ,~ \forall~ t \in \mathds{N}^+.
\end{align}
Considering theses expressions, the probability that packet $j$ is innovative for receiver $i$ at time $t$ is:
\begin{align}
&p_{i,n}(j,t) = \overline{\prod_{k \in \lambda_{ij}(n^+(t))}} p_i(k)   \nonumber\\
& \times \overline{\prod_{k \in \mathcal{K}_{ij}}}   \left\{ \left( \prod_{s \in \mathcal{X}_i^d(k)} p_i(s)  +  \prod_{s \in \lambda_{ij}(k)} p_i(s)  \right. \right. \nonumber\\
& \left. {} \left. {} \times (1- \prod_{s \in \mathcal{X}_i^d(k) \setminus \lambda_{ij}(k)} p_i(s) )q_i(u_i(k )) \right) \right. \nonumber\\
& \left. {} \times \left( \prod_{s \in \mathcal{X}_i^d(k)} p_i(s)  + (1- \prod_{s \in \mathcal{X}_i^d(k)} p_i(s) )q_i(u_i(k )) \right)^{-1} \right\} \nonumber\\
&  \qquad ~ \forall~ j \in  \mathcal{W}_i(t),~ \forall~ i \in \tau ,~ \forall~ t \in \mathds{N}^+.
\end{align}

\section{Proof of \thref{th4}}\label{ap5}

If receiver $i$ do not have all his primary needed packets in an uncertain state, then it is clear that:
\begin{align}
p_{i,f}(t) = 0 ,~ \forall~ i \in ( \tau \setminus F ),~ \forall~ t \in \mathds{N}^+.
\end{align} 
If receiver $i$ have all his primary packets in an uncertain state then the event of finishing can occur only if these two conditions are true:
\begin{enumerate}
\item The Uncertain sets is equal to the Wants sets ( i.e $\mathcal{W}_i(t) = \mathcal{X}_i(t)$). In other words, all the primary packets were attempted with no feedback heard from that receiver. 
\item All the packets in the Uncertain sets were successfully received by receiver $i$.
\end{enumerate}
The probability that a packet $j$ is received by receiver $i$ but not fed back before time $t$ is:
\begin{align}
\mathds{P}( &f_{ij}(t) = 0 | f_{ij}(n^-(t) \times T_f +1 ) = x )  \nonumber \\
 &= 1-\mathds{P}( f_{ij}(t) = 1 | f_{ij}(n^-(t) \times T_f +1) = x ) \nonumber \\
& = 1-p_{i,n}(j,t).
\end{align}
From the above conditions, we can express $p_{i,f}$ as:
\begin{align}
p_{i,f}(t)  &= \prod_{j \in \mathcal{X}_i(t)}(1-p_{i,n}(j,t)) \nonumber \\
&= \prod_{j \in \mathcal{W}_i(t)}(1-p_{i,n}(j,t)) ,~ \forall~ i \in \tau \cap F ,~ \forall~ t \in \mathds{N}^+.
\end{align}
If receiver $i$ do not have all his primary needed packets in an uncertain state, then $\exists j^* \in (\mathcal{W}_i(t) \setminus \mathcal{X}_i(t))$ with $p_{i,n}(j^*,t) = 1$. Thus, we have:
\begin{align}
\prod_{j \in \mathcal{W}_i(t)}(1-p_{i,n}(j,t)) = 0  ,i \in ( \tau \setminus F ),~ \forall~ t \in \mathds{N}^+.
\end{align}
The probability that receiver $i$ got all the requested packets but $\mathcal{W}_i(t) \neq \varnothing$, at time $t$ is:
\begin{align}
p_{i,f}(t)  &= \prod_{j \in \mathcal{W}_i(t)} \left( 1- p_{i,n}(j,t) \right) , ~ \forall~ i \in \tau,~ \forall~ t \in \mathds{N}^+.
\end{align}

\section{Proof of \thref{th5}}\label{ap6}

Let $\mathcal{D}\left(\kappa,t\right)$ be the sum of the decoding delay increases of all receivers after this transmission (i.e $\mathcal{D}\left(\kappa,t\right) = \sum\limits_{i=1}^{M} d_i(\mathfrak{J},t)$). According to the analysis done in \sref{sec:formulation}, the expected sum decoding delay increase after this transmission can be expressed as:
\begin{align}
\mathds{E}&\left[\mathcal{D}(\kappa,t)\right] = \sum_{ i \in \widehat{\tau}(\kappa(t)) \cap \overline{F}} (1-p_i(t)) \nonumber \\
&  + \sum_{ i \in \widehat{\tau}(\kappa(t)) \cap F} (1-p_i(t))(1-p_{i,f}(t)) \nonumber \\
&   + \sum_{ i \in \tau(\kappa(t)) \cap (U \setminus F)} (1-p_i(t))(1-p_{i,n}(\mathfrak{J}_i(\kappa(t)),t)) \nonumber \\
&  +  \sum_{ i \in \tau(\kappa(t)) \cap F} (1-p_i(t))(1-p_{i,n}(\mathfrak{J}_i(\kappa(t)),t)-p_{i,f}(t)).
\end{align}
To minimize the decoding delay, the chosen clique in the G-IDNC graph must minimize the expected mean decoding delay. The clique problem can be formulated as:
\begin{align}
&\kappa^{*} (t) =  \underset{\kappa(t) \in \mathcal{G}}{\text{argmin}}\left\{ \mathds{E} \left[ \mathcal{D}\left(\kappa,t\right) \right]\right\}  \nonumber\\ 
&=  \underset{\kappa(t) \in \mathcal{G}}{\text{argmin}} \left\{ 
\sum_{ i \in \widehat{\tau}(\kappa(t)) \cap \overline{F}} (1-p_i(t)) \right.\nonumber \\
&  \left. {}+ \sum_{ i \in \widehat{\tau}(\kappa(t)) \cap F} (1-p_i(t))(1-p_{i,f}(t)) \right.\nonumber \\
&  \left. {} + \sum_{ i \in \tau(\kappa(t)) \cap (U \setminus F)} (1-p_i(t))(1-p_{i,n}(\mathfrak{J}_i(\kappa(t)),t)) \right.\nonumber \\
&  \left. {} + \sum_{ i \in \tau(\kappa(t)) \cap F} (1-p_i(t))(1-p_{i,n}(\mathfrak{J}_i(\kappa(t)),t)-p_{i,f}(t)) \right\} \nonumber \\
&=  \underset{\kappa \in \mathcal{G}}{\text{argmax}} \left\{ \sum_{ i \in \tau(\kappa(t)) \cap \overline{F}} (1-p_i(t)) \right.\nonumber \\
&  \left. {}+ \sum_{ i \in \tau(\kappa(t)) \cap F} (1-p_i(t))(1-p_{i,f}(t)) \right.\nonumber \\
&  \left. {} - \sum_{ i \in \tau(\kappa(t)) \cap (U \setminus F)} (1-p_i(t))(1-p_{i,n}(\mathfrak{J}_i(\kappa(t)),t)) \right.\nonumber \\
&  \left. {} - \sum_{ i \in \tau(\kappa(t)) \cap F} (1-p_i(t))(1-p_{i,n}(\mathfrak{J}_i(\kappa(t)),t)-p_{i,f}(t)) \right\}.
\end{align}
It is clear that:
\begin{align}
&\sum_{ i \in \tau(\kappa(t)) \cap \overline{F}} (1-p_i(t))  + \sum_{ i \in \tau(\kappa(t)) \cap F} (1-p_i(t))(1-p_{i,f}(t)) \nonumber \\
= &\sum_{ i \in \tau(\kappa(t))} (1-p_i(t)) - \sum_{ i \in \tau(\kappa(t)) \cap F} (1-p_i(t))p_{i,f}(t) . 
\end{align}
We can develop the last term as follows:
\begin{align}
&\sum_{ i \in \tau(\kappa(t)) \cap F} (1-p_i(t))(1-p_{i,n}(\mathfrak{J}_i(\kappa(t)),t)-p_{i,f}(t)) \nonumber \\
&= \sum_{ i \in \tau(\kappa(t)) \cap F} (1-p_i(t))(1-p_{i,n}(\mathfrak{J}_i(\kappa(t)),t)) \nonumber \\
& - \sum_{ i \in \tau(\kappa(t)) \cap F} (1-p_i(t))p_{i,f}(t).
\end{align}
By simple computation the above expression can be simplified as follows:
\begin{align}
\kappa^{*} (t) &=  \underset{\kappa \in \mathcal{G}}{\text{argmax}} \left\{ \sum_{ i \in \tau(\kappa(t))} (1-p_i(t)) \right.\nonumber \\
&  \left. {} - \sum_{ i \in \tau(\kappa(t)) \cap (U \setminus F)} (1-p_i(t))(1-p_{i,n}(\mathfrak{J}_i(\kappa(t)),t)) \right.\nonumber \\
&  \left. {} - \sum_{ i \in \tau(\kappa(t)) \cap F} (1-p_i(t))(1-p_{i,n}(\mathfrak{J}_i(\kappa(t)),t)) \right\} \nonumber \\
&= \underset{\kappa \in \mathcal{G}}{\text{argmax}} \left\{ \sum_{ i \in \tau(\kappa(t))} (1-p_i(t)) \right.\nonumber \\
&  \left. {} - \sum_{ i \in \tau(\kappa(t)) \cap U } (1-p_i(t))(1-p_{i,n}(\mathfrak{J}_i(\kappa(t)),t)) \right\}.
\end{align}
Note that when the state of the targeted packet in a given transmission is known, then the probability that this packet is innovative is $1$ (i.e. $p_{i,n}(\mathfrak{J}_i(\kappa(t)),t) = 1 ,~ \forall~ i \in (\tau(\kappa) \setminus U, \forall~ t \in \mathds{N}^+$). The sum can be expressed as:
\begin{align}
&\sum_{i \in \tau(\kappa(t)) \cap U } \left( 1 - p_i(t) \right)\left(1- p_{i,n}\left(\mathfrak{J}_i(\kappa(t)),t\right) \right) \nonumber \\
& \qquad= \sum_{i \in \tau(\kappa(t)) } \left( 1 - p_i(t) \right)\left(1- p_{i,n}\left(\mathfrak{J}_i(\kappa(t)),t\right) \right). 
\end{align}
The formulation of the decoding delay becomes:
\begin{align}
\kappa^*(t) &=  \underset{\kappa \in \mathcal{G}}{\text{argmax}}  \sum_{i \in \tau(\kappa(t))} (1-p_i(t)) \times p_{i,n}(\mathfrak{J}_i(\kappa(t)),t).  
\end{align}

\section{Proof of \thref{th6}} \label{ap7}

In this configuration, the length of the downlink frame is equal to $1$, in other words $T_f=2$. Each receiver send feedback upon successful reception and since only one packet at most per frame is attempted than the system constraint that there is at least one packet attempted only once can removed. In fact it is always possible to accurately determine the state of the channel upon successful reception of the feedback. If a feedback is received at time $t$ it means that the receiver channel was good at $t-2$. The erasure probability can be simplified:
\begin{align}
p_i(t) &= \sum_{l=0}^{t-t_i^{(0)}-1} \mu_i^lb_i^p .
\end{align}
Packets are attempted in this configuration at time $t=2*k+1, ~ \forall~ k \in \mathds{N}$ and the feedback is received at time $t=2*k,~ \forall~ k \in \mathds{N}^+$. Thus, we have:
\begin{align}
p_{i,n}(j,t) &= p_{i,n}(j,2n^+(t)+1) \nonumber \\
& \qquad ~ \forall~ i \in \mathcal{M},~ \forall~ j\in \mathcal{W}_i,~ \forall~ t \in \mathds{N}^+.
\end{align}
Since we are always computing the expected decoding delay at the beginning of the frame ($t=2*k+1$), then no packet is yet attempted during this frame. In other words $\lambda_{ij}(n^+(t)) = \varnothing$, which gives:
\begin{align}
\overline{\prod_{k \in \lambda_{ij}(n^+(t))}} p_i(k)   = 1.
\end{align}
Note that only one packet can be attempted during a frame then $|\lambda_{ij}(n)| \leq 1$ and $|\mathcal{X}_i^d(n)| \leq 1, \forall~ n \in \mathds{N}^+$. More specifically, we have:
\begin{align}
\begin{cases}
\lambda_{ij}(k) = \mathcal{X}_i^d(k) = 2k-1 \hspace{0.3cm} &\text{if } k \in \mathcal{K}_{ij} \\
\lambda_{ij}(k) = \varnothing &\text{if } k \notin \mathcal{K}_{ij}.
\end{cases}
\end{align}
The probability that packet $j$ is innovative to receiver $i$ at time $t$ can be simplified as follows:
\begin{align}
p_{i,n}(j,t) &= \overline{\prod_{k \in \mathcal{K}_{ij}}}    \cfrac{p_i(2k-1)}{p_i(2k-1)+(1-p_i(2k-1))q_i(2k)} \nonumber \\
&~ \forall~ j \in \mathcal{W}_i(t),~ \forall~ i \in \tau ,~ \forall~ t \in \mathds{N}^+.
\end{align}

\bibliographystyle{IEEEtran}
\bibliography{references}

% Generated by IEEEtran.bst, version: 1.13 (2008/09/30)
\begin{thebibliography}{10}
\providecommand{\url}[1]{#1}
\csname url@samestyle\endcsname
\providecommand{\newblock}{\relax}
\providecommand{\bibinfo}[2]{#2}
\providecommand{\BIBentrySTDinterwordspacing}{\spaceskip=0pt\relax}
\providecommand{\BIBentryALTinterwordstretchfactor}{4}
\providecommand{\BIBentryALTinterwordspacing}{\spaceskip=\fontdimen2\font plus
\BIBentryALTinterwordstretchfactor\fontdimen3\font minus
  \fontdimen4\font\relax}
\providecommand{\BIBforeignlanguage}[2]{{%
\expandafter\ifx\csname l@#1\endcsname\relax
\typeout{** WARNING: IEEEtran.bst: No hyphenation pattern has been}%
\typeout{** loaded for the language `#1'. Using the pattern for}%
\typeout{** the default language instead.}%
\else
\language=\csname l@#1\endcsname
\fi
#2}}
\providecommand{\BIBdecl}{\relax}
\BIBdecl

\bibitem{refahmed}
A.~Douik, S.~Sorour, M.-S. Alouini, and T.~Y. Al-Naffouri, ``Delay reduction in
  lossy intermittent feedback for generalized instantly decodable network
  coding,'' in \emph{Proc. of IEEE 9th International Conference on Wireless and
  Mobile Computing, Networking and Communications, (WiMob' 2013), Lyon,
  France}, Oct. 2013.

\bibitem{ref7}
J.~Sundararajan, D.~Shah, and M.~Medard, ``Online network coding for optimal
  throughput and delay - the three-receiver case,'' in \emph{Proc. of IEEE
  International Symposium on Information Theory and Its Applications, (ISITA'
  2008), Auckland, New Zealand}, Dec. 2008, pp. 1--6.

\bibitem{5753573}
X.~Li, C.-C. Wang, and X.~Lin, ``On the capacity of immediately-decodable
  coding schemes for wireless stored-video broadcast with hard deadline
  constraints,'' \emph{IEEE Transactions on Selected Areas in Communications,},
  vol.~29, no.~5, pp. 1094--1105, 2011.

\bibitem{xiao1}
L.~Lu, M.~Xiao, and L.~Rasmussen, ``Design and analysis of relay-aided
  broadcast using binary network codes,'' \emph{Journal of Communications
  (JCM), Special Issue on Advances in Communications and Networking, Nov.
  2011}.

\bibitem{1208720}
R.~Cruz and A.~Santhanam, ``Optimal routing, link scheduling and power control
  in multihop wireless networks,'' in \emph{Proc. of IEEE 22th Annual Joint
  Conference of the Computer and Communications, (INFOCOM' 2003), San
  Francisco, California, USA}, vol.~1, 2003, pp. 702--711.

\bibitem{1208721}
H.~Tracey and S.~Desmond, \emph{Network Coding: An Introduction}.\hskip 1em
  plus 0.5em minus 0.4em\relax Cambridge, April, 2008.

\bibitem{49413587}
S.~Rayanchu, S.~Sen, J.~Wu, S.~Banerjee, and S.~Sengupta, ``Loss-aware network
  coding for unicast wireless sessions: design, implementation, and performance
  evaluation,'' in \emph{Proc. of the International Conference on Measurement
  and Modeling Of Computer Systems (ACM SIGMETRICS' 2008), Annapolis, MD, USA},
  ser. SIGMETRICS '08.\hskip 1em plus 0.5em minus 0.4em\relax New York, NY,
  USA: ACM, 2008, pp. 85--96.

\bibitem{ref4}
S.~Sorour and S.~Valaee, ``On minimizing broadcast completion delay for
  instantly decodable network coding,'' in \emph{Proc. of IEEE International
  Conference on Communications, (ICC' 2010), Cape Town, South Africa}, May,
  2010, pp. 1--5.

\bibitem{ref12}
------, ``Effect of feedback loss on instantly decodable network coding,'' in
  \emph{Proc. of IEEE 7th International Wireless Communications and Mobile
  Computing Conference, (IWCMC' 2011), Istanbul, Turkey}, July 2011, pp.
  21--28.

\bibitem{ref17}
M.~Esmaeilzadeh and P.~Sadeghi, ``Optimizing completion delay in network coded
  systems over tdd erasure channels with memory,'' in \emph{Proc. of IEEE
  International Symposium on Communications and Information Technologies,
  (ISCIT' 2012), Queensland, Australia}, Oct. 2012, pp. 883--888.

\bibitem{ref13}
S.~Sorour and S.~Valaee, ``Completion delay reduction in lossy feedback
  scenarios for instantly decodable network coding,'' in \emph{Proc. of IEEE
  22nd International Symposium on Personal Indoor and Mobile Radio
  Communications, (PIMRC' 2011), Toronto, Canada}, Sept. 2011, pp. 2025--2029.

\bibitem{ref2}
------, ``Minimum broadcast decoding delay for generalized instantly decodable
  network coding,'' in \emph{Proc. of IEEE Global Telecommunications
  Conference, (GLOBECOM' 2010), Miami, Florida, USA}, Dec. 2010, pp. 1--5.

\bibitem{ref3}
S.~Sorour, N.~Aboutorab, P.~Sadeghi, M.~S. Karim, T.~Al-Naffouri, and M.-S.
  Alouini, ``Delay reduction in persistent erasure channels for generalized
  instantly decodable network coding,'' \emph{Proc. of IEEE Vehicular
  Technology Conference, (VTC' 2013), Dresden, Germany}, pp. 1--5, June. 2013.

\bibitem{ref6}
P.~Sadeghi, R.~Shams, and D.~Traskov, ``An optimal adaptive network coding
  scheme for minimizing decoding delay in broadcast erasure channels,''
  \emph{EURASIP Journal on Wireless Communications and Networking}, vol. 2010,
  no.~1, 2010.

\bibitem{ref16}
M.~Karim and P.~Sadeghi, ``Decoding delay reduction in broadcast erasure
  channels with memory for network coding,'' in \emph{Proc. of IEEE 23rd
  International Symposium on Personal Indoor and Mobile Radio Communications,
  (PIMRC' 2012), Sydney, Australia}, Sept. 2012, pp. 60--65.

\bibitem{ref18}
S.~Sorour and S.~Valaee, ``On densifying coding opportunities in instantly
  decodable network coding graphs,'' in \emph{Proc. of IEEE International
  Symposium on Information Theory Proceedings, (ISIT' 2012), Cambridge, MA,
  USA}, July 2012, pp. 2456--2460.

\bibitem{6655395}
M.~Muhammad, M.~Berioli, G.~Liva, and G.~Giambene, ``Instantly decodable
  network coding protocols with unequal error protection,'' in \emph{Proc. of
  IEEE International Conference on Communications (ICC' 2013),Sydney,
  Australia}, June 2013, pp. 5120--5125.

\bibitem{6725590}
Y.~Liu and C.~W. Sung, ``Quality-aware instantly decodable network coding,''
  \emph{IEEE Transactions on Wireless Communications}, vol.~13, no.~3, pp.
  1604--1615, March 2014.

\bibitem{6120247}
X.~Li, C.-C. Wang, and X.~Lin, ``Optimal immediately-decodable inter-session
  network coding (idnc) schemes for two unicast sessions with hard deadline
  constraints,'' in \emph{49th Annual Allerton Conference on Communication,
  Control, and Computing (Allerton' 2011),Monticello, IL, USA}, Sept 2011, pp.
  784--791.

\bibitem{xiao2}
L.~Lu, M.~Xiao, and L.~Rasmussen, ``Relay-aided broadcasting with
  instantaneously decodable binary network codes,'' in \emph{Proc. of IEEE
  International Conference on Computer Communication Networks, (ICCCN), Maui,
  Hawaii, August, 2011.}

\bibitem{4313060}
S.~Y. El~Rouayheb, M.~A.~R. Chaudhry, and A.~Sprintson, ``On the minimum number
  of transmissions in single-hop wireless coding networks,'' in \emph{Proc. of
  IEEE Information Theory Workshop, (ITW' 2007), Bergen, Norway}, 2007, pp.
  120--125.

\bibitem{ref5}
P.~Sadeghi, D.~Traskov, and R.~Koetter, ``Adaptive network coding for broadcast
  channels,'' in \emph{Workshop on Network Coding, Theory, and Applications,
  (NetCod' 2009), Lausanne, Switzerland}, June 2009.

\bibitem{ref15}
K.~Yamaguchi and S.~Masuda, ``A new exact algorithm for the maximum weight
  clique problem,'' in \emph{Computers and Communications 23rd International
  Conference on Circuits/Systems, (ITC-CSCC' 2008), Yamaguchi, Japan}, 2008.

\bibitem{ref10}
M.~R. Garey and D.~S. Johnson, ``Computers and intractability: A guide to the
  theory of np-completeness,'' in \emph{Freeman}, 1979.

\bibitem{ref11}
G.~Ausiello, P.~Crescenzi, V.~Kann, Marchetti-sp, G.~Gambosi, and A.~M.
  Spaccamela, ``Complexity and approximation: Combinatorial optimization
  problems and their approximability properties,'' in \emph{Springer}, 1999.

\bibitem{ref8}
E.~Drinea, C.~Fragouli, and L.~Keller, ``Delay with network coding and
  feedback,'' in \emph{Proc. of IEEE International Symposium on Information
  Theory, (ISIT' 2009), Seoul, Korea}, July 2009, pp. 844--848.

\bibitem{45284}
M.~Mushkin and I.~Bar-David, ``Capacity and coding for the gilbert-elliott
  channels,'' \emph{IEEE Transactions on Information Theory}, vol.~35, no.~6,
  pp. 1277--1290, 1989.

\bibitem{4607216}
P.~Sadeghi, R.~Kennedy, P.~Rapajic, and R.~Shams, ``Finite-state markov
  modeling of fading channels - a survey of principles and applications,''
  \emph{IEEE Signal Processing Magazine}, vol.~25, no.~5, pp. 57--80, 2008.

\bibitem{arg1}
A.~{Le}, A.~S. {Tehrani}, A.~G. {Dimakis}, and A.~{Markopoulou}, ``{Instantly
  decodable network codes for real-time applications},'' \emph{ArXiv e-prints},
  Mar. 2013.

\bibitem{refsameh}
S.~{Sorour}, A.~{Douik}, S.~{Valaee}, T.~Y. {Al-Naffouri}, and M.-S. {Alouini},
  ``{Partially blind instantly decodable network codes for lossy feedback
  environment},'' \emph{Accepted in IEEE Transactions on Wireless
  Communications, available ArXiv e-prints}.

\end{thebibliography}

\end{document}